\newcommand{\RN}[1]{%
	\textup{\uppercase\expandafter{\romannumeral#1}}%
}
\def\pa{\partial}
\def\R{\mathbb{R}}
\newcommand{\erfc}{\operatorname{erfc}}
\newcommand{\erf}{\operatorname{erf}}
\newcommand{\re}{\operatorname{Re}}
\newcommand{\Tr}{\operatorname{Tr}}
\theoremstyle{plain}
\newtheorem{thm}{Theorem}[section]
\newtheorem{cor}[thm]{Corollary}
\newtheorem{prop}[thm]{Proposition}
\newtheorem{rem}[thm]{Remark}
\theoremstyle{remark}
\numberwithin{equation}{section}
\NewDocumentCommand{\pFq}{O{}mmmmm}
 {
  % #2 = left subscript, #3 = right subscript
  % #4 = top, #5 = bottom, #6 = right
  \group_begin:
  \keys_set:nn { hypergeometric } { #1 }
  \hypergeometric_print:nnnnn { #2 } { #3 } { #4 } { #5 } { #6 }
  \group_end:
 }
\NewDocumentCommand{\hypergeometricsetup}{m}
 {
  \keys_set:nn { hypergeometric } { #1 }
 }
\begin{document}

\title[Spectral moments of the real Ginibre ensemble]{Spectral moments of the real Ginibre ensemble}

%%%%%%%%%%%%%%%%%%%%%%%%%%%%% author %%%%%%%%%%%%%%%%%%%%%%%%%%%%
\author{Sung-Soo Byun}
\address{Department of Mathematical Sciences and Research Institute of Mathematics, Seoul National University, Seoul 151-747, Republic of Korea}
\email{sungsoobyun@snu.ac.kr}

\author{Peter J. Forrester}
\address{School of Mathematical and Statistics, The University of Melbourne, Victoria 3010, Australia}
\email{pjforr@unimelb.edu.au} 
%%%%%%%%%%%%%%%%%%%%%%%%%%%%% author %%%%%%%%%%%%%%%%%%%%%%%%%%%%

\thanks{Sung-Soo Byun was partially supported by the POSCO TJ Park Foundation (POSCO Science Fellowship) and by the New Faculty Startup Fund at Seoul National University.  Funding support to Peter Forrester for this research was through the Australian Research Council Discovery Project grant DP210102887.
}

\subjclass[2020]{Primary 60B20; Secondary 33C20}
\keywords{Real Ginibre ensemble, real eigenvalues, spectral moments, hypergeometric functions, recurrence relation, asymptotic expansions}

\begin{abstract}
The moments of the real eigenvalues of real Ginibre matrices are investigated from the viewpoint of explicit formulas, differential and difference equations, and large $N$ expansions. These topics are inter-related. For example, a third order differential equation can be derived for the density of the real eigenvalues, and this can be used to deduce a second order difference equation for the general complex moments $M_{2p}^{\rm r}$. The latter are expressed in terms of the ${}_3 F_2$ hypergeometric functions, with a simplification to the ${}_2 F_1$ hypergeometric function possible for $p=0$ and $p=1$, allowing for the large $N$ expansion of these moments to be obtained. The large $N$ expansion involves both integer and half integer powers of $1/N$. The three   term recurrence then provides the large $N$ expansion of the full sequence $\{ M_{2p}^{\rm r} \}_{p=0}^\infty$. Fourth and third order linear differential equations are obtained for the moment generating function and for the Stieltjes transform of the real density, respectively, and properties of  the large $N$ expansion of these quantities are determined.
\end{abstract}

\maketitle
%\tableofcontents

\section{Introduction}

\subsection{Context}
The study of moments of the spectral density for random matrix ensembles hold a special place in the development of random matrix theory. One landmark was Wigner's \cite{Wi55,Wi58} introduction of what is now referred to as the moment method as a strategy to prove that a large class of symmetric random matrices have the same scaled spectral density $\rho^{\rm W}(x) := {2 \over \pi} (1 - x^2) \chi_{|x|<1}$. 
The functional form $\rho^{\rm W}(x)$ is now known as the Wigner semi-circle. An essential point is that the $2p$-th even moment $m_{2p}$ of $\rho^{\rm W}(x)$ is equal to $2^{-2p} C_p$, where $C_p$ denotes the $p$-th Catalan number, and moreover this moment sequence uniquely determines $\rho^{\rm W}(x)$. Wigner then used the fact that in the random matrix setting $m_{2p} = \lim_{N \to \infty} (\sigma^2/2^2N)^p\mathbb E ({\rm Tr} \, X^{2p})$ as the starting point for the computation of limiting scaled moments of the spectral density (here $\sigma^2$ is the variance of the off diagonal entries of $X$, which are assumed too to have mean zero). Subsequent to Wigner's work, the method of moments, and its generalisation to cumulants, has been used extensively in questions beyond the spectral density such as for the Gaussian fluctuations of linear statistics and pair counting statistics; see the recent review \cite{SW23}.

Specialise now to complex Hermitian matrices from the Gaussian unitary ensemble, with unit variance of the modulus of the off-diagonal entries so that the joint element distribution is proportional to $e^{-{\rm Tr} \, X^2/2}$. Another landmark has been in relation to the interpretation of the coefficients in the large $N$ of $M_{2p}^{\rm GUE} := \mathbb E \, {\rm Tr} \, X^{2p}$. This expansion is a terminating series in $1/N^2$,
\begin{equation} \label{1.0b}
N^{-p-1} M_{2p}^{\rm GUE} = \sum_{g=0}^{\lfloor p/2 \rfloor}
{\nu_{p,g}^{\rm GUE} \over N^{2g}}.
\end{equation}
The result of Wigner gives that $\nu_{p,0} = C_p$.
Using a diagrammatic interpretation of non-zero terms in the computation of ${\rm Tr} \, X^{2p}$ as implied by Wick's theorem, it was shown by Br\'ezin et al.~\cite{BIPZ78} that the $\nu_{p,g}$ count the number of pairings of the sides of a $2p$-gon which are dual to a map on a compact orientable Riemann surface of genus $g$. Equivalently, after the sides are identified in pairs, it is required that a surface of genus $g$ results. The case $g=0$ is referred to as planar, and represent the leading order in (\ref{1.0b}).

Motivated by this topological interpretation, Harer and Zagier \cite{HZ86} further investigated the sequences $\{ M_{2p}^{\rm GUE} \}$ and $\{ \nu_{p,g}^{\rm GUE} \}$.\footnote{The notation $C(p,N)$ in \cite{HZ86} is equivalent to $M_{2p}^{\rm GUE}$. } In particular they obtained that $\{ M_{2p}^{\rm GUE} \}$ obey the three term recurrence
\begin{equation} \label{3T}
(p+1) M_{2p}^{\rm GUE} = (4p - 2) N M_{2p-2}^{\rm GUE} + (p-1) (2p-1) (2p - 3) M_{2p-4}^{\rm GUE}
\end{equation}
subject to the initial conditions $M_{0}^{\rm GUE}= N$, $M_{2}^{\rm GUE} = N^2$. From this it was shown that $\{ \nu_{p,g} \}$ obey the two-variable recurrence
\begin{equation} \label{2T}
(p+2)  \nu_{p+1,g}^{\rm GUE} = p(2p+1)  (2p-1)  \nu_{p-1,g-1}^{\rm GUE} +2 (2p+1)  \nu_{p,g}^{\rm GUE}, 
\end{equation}
subject to the initial condition $\nu_{0,0}^{\rm GUE} = 1$, and boundary conditions $\nu_{p,g}^{\rm GUE} = 0$ for any of the conditions $k<0, g<0$ or $g > \lfloor p/2 \rfloor$.
For instance, the first few values are given by 
\begin{align*}
\nu_{p,0}^{ \rm GUE } = C_p , 
\qquad 
\nu_{p,1}^{ \rm GUE }  = C_p \, \frac{(p+1)!}{(p-2)!}  \frac{ 1}{12},  
\qquad  \nu_{p,2}^{ \rm GUE } = C_p \, \frac{(p+1)!}{(p-4)!} \frac{ 5p-2 }{1440},
\end{align*}
see e.g. \cite[Theorem 7]{WF14}.

The focus of our study relating to random matrix spectral moments in the present work is an outgrowth of theory underlying and relating the the three term recurrence (\ref{3T}), combined with results from the recent paper \cite{By23b} by one of us.
The question addressed in \cite{By23b} is to identify a recurrence relation for the spectral moments of the real eigenvalues of elliptic GinOE matrices. The latter is the ensemble of asymmetric real Gaussian matrices defined by
\begin{equation} \label{SS}
\sqrt{1 + \tau \over 2} S_+ +
\sqrt{1 - \tau \over 2} S_-,
\end{equation}
where $S_{\pm}$ are independent random real symmetric and skew-symmetric GOE matrices, and $0 \le \tau \le 1$ is a parameter. For $\tau = 1$ one sees that elliptic GinOE reduces to GOE. Earlier, the work of Ledoux \cite{Le09} had found a fifth order linear recurrence for $\{ M_{2p}^{\rm GOE} \}$. Existing literature due to Goulden and Jackson \cite{GJ97}, extending the work of Harer and Zagier, has given an interpretation of the $M_{2p}^{\rm GOE}$ in terms of pairings which lead to both nonoriented and orientable surfaces. 
The analogue of (\ref{1.0b}) is now
\begin{equation} \label{1.0b+}
N^{-p-1} M_{2p}^{\rm GOE} = \sum_{l=0}^{p}
{\nu_{p,l}^{\rm GOE} \over N^{l}},
\end{equation}
which in particular involves both odd an even powers of $1/N$. In keeping with the universality of the Wigner semi-circle law, one again has for the leading contribution $\nu_{p,0}^{\rm GOE}
= C_p$.

The other extreme of (\ref{SS}) is $\tau = 0$, when each entry is identically distributed as an independent standard real Gaussian, this giving rise to a random matrix from GinOE. See \cite{BF22,BF23} for recent reviews on the Ginibre ensembles. Here there was no previous literature on the moment sequence of the real eigenvalues. The $\tau = 0$ limiting case of the in general 11-term linear recurrence found in \cite{By23b} for the moments of the density of real eigenvalues was found to reduce to just a three term recurrence 
\begin{equation} \label{1.17}
2 (2p+1) M_{2p}^{\rm r, GinOE} =
(2p-1) (6p+4N - 5) M_{2p-2}^{\rm r, GinOE} 
- (2p-3) (2p+N - 4) (2p+2N - 3) M_{2p-4}^{\rm r, GinOE}.
\end{equation}

Motivated by the relative simplicity of (\ref{1.17}), and its similarity with the GUE moment recurrence (\ref{3T}), in this work we will carry out a study of the moments $\{ M_{2p}^{\rm r, GinOE} \}$ as a stand alone sequence, not viewed as a limit of 
moments for the real eigenvalues of elliptic GinOE. In the case of the recurrence (\ref{3T}) it has been known since the work of Haagerup and Thorbjørnsen \cite{HT03} that there is a tie in with certain higher order differential equation and also with certain special function functions, in particular hypergeometric polynomials. In fact such structures have been shown to also be features of the spectral moments in a broad range of settings \cite{Le09,MS11,CMOS19,CCO20,RF21,ABGS21,FR21,Fo21,FLSY23,GGR21}. However GinOE is distinct from the ensembles in these earlier studies since only a fluctuating fraction of eigenvalues are real.

\subsection{Some known results}

Let $G$ be a real Ginibre matrix (GinOE) of size $N$, defined by the requirement that all entries are independent standard Gaussians.  
By making use of knowledge of the Schur function average with respect to GinOE matrices, it was shown in \cite{SK09,FR09} that for any positive integer $p \ge 1$, 
\begin{equation} \label{spectral moments for real and complex}
\mathbb{E} \Big[ \Tr G^{2p} \Big] = 2^p \frac{ \Gamma(N/2+p) }{ \Gamma(N/2) } = N(N+2)\dots (N+2p-2). 
\end{equation}
But with the eigenvalues of GinOE matrices being in general both real and complex, this is a result which combines moments relating to the real eigenvalues, and moments relating to the complex eigenvalues.
Specifically,
let $\mathcal{N}_\R$ be the number of real eigenvalues and define
\begin{equation}
M_{2p,N}^{ \rm r } := \mathbb{E} \bigg[ \sum_{j=1}^{ \mathcal{N}_\R } x_j^{2p}  \bigg], \qquad M_{2p,N}^{ \rm c } := \mathbb{E} \bigg[ \sum_{j=1}^{ N-\mathcal{N}_\R } z_j^{2p} \bigg] = 2 \, \mathbb{E} \bigg[ \sum_{j=1}^{ (N-\mathcal{N}_\R)/2 } \re z_j^{2p} \bigg], % \mathbb{E} \Big[ \sum_{j=1}^{ (N-\mathcal{N}_\R)/2 } z_j^{2p}+\bar{z}_j^{2p} \Big]
\end{equation}
where $x_j$ $(j=1,\dots, \mathcal{N}_\R)$ and $z_j$ ($j=1,\dots, N-\mathcal{N}_\R$) are real and complex eigenvalues of $G$, respectively. 
Here and in the sequel, we drop the superscript ``GinOE'', cf. \eqref{1.17}. 
We have used the convention $z_{j+(N-\mathcal{N}_\R)/2}=\bar{z}_j.$    
Note that by definition, 
\begin{equation} \label{real+cplx}
M_{2p,N}^{ \rm r }+ M_{2p,N}^{ \rm c } = \mathbb{E} \Big[ \Tr G^{2p} \Big] . 
\end{equation}

It has been  known for some time \cite{EKS94,Ed97} that the average densities of real and complex eigenvalues of the GinOE are given by 
\begin{align}
\rho_N^{ \rm r }(x) &=  \frac{1}{\sqrt{2\pi}(N-2)! } \bigg(  \Gamma(N-1,x^2) +  2^{(N-3)/2} e^{ -\frac{x^2 }{2} } |x|^{N-1} \gamma\Big( \frac{N-1}{2}, \frac{x^2}{2} \Big)  \bigg),  \label{real density}
\\
\rho_N^{ \rm c }(x+iy) & = \sqrt{ \frac{2}{\pi}  } |y| \erfc(\sqrt{2}|y|) e^{2y^2} \frac{ \Gamma(N-1,x^2+y^2) }{ \Gamma(N-1) }. \label{complex density}
\end{align}
Here, 
$$
\gamma(a,z) := \int_0^z e^{a-1} e^{-t}\,dt, \qquad \Gamma(a,z) :=  \int_z^\infty e^{a-1} e^{-t}\,dt =\Gamma(a)- \gamma(a,z)
$$
are lower and upper incomplete gamma functions and 
$$
\erfc(z) := \frac{2}{ \sqrt{\pi} } \int_z^\infty e^{-t^2} \,dt
$$
is the complementary error function. 
The densities relate to the even integer moments of the real and complex eigenvalues by
\begin{equation}\label{1.11}
M_{2p,N}^{ \rm r } = \int_\R x^{2p} \rho_N^{ \rm r }(x) \,dx ,\qquad M_{2p,N}^{ \rm c } = \int_{\R^2} (x+iy)^{2p} \rho_N^{ \rm c }(x+iy) \,dx\,dy.
\end{equation}

The pioneering work \cite{EKS94} relating to the real eigenvalues of GinOE has provided both an exact evaluation, and an asymptotic expansion, for
$$
M_{0,N}^{ \rm r } = N- M_{0,N}^{ \rm c }  = \mathbb{E} \mathcal{N}_\R.
$$
Thus, from \cite[Cor.~5.1]{EKS94} we have the expression in terms of a particular Gauss hypergeometric function
\begin{equation}\label{1.12}
M_{0,N}^{ \rm r } = 
\frac12 + \sqrt{ \frac{2}{\pi} } \frac{ \Gamma(N+\frac12) }{ (N-1)! } \pFq{2}{1}{1,-\frac12}{N}{\frac12}.
\end{equation}
As an application of this formula, it is shown in \cite[Cor.~5.2]{EKS94} that for $N \to \infty$
\begin{equation}\label{1.13}
M_{0,N}^{ \rm r } =
\sqrt{2N \over \pi} \bigg ( 1 -
{3 \over 8 N} - {3 \over 128 N^2} + {27 \over 1024 N^3} + {499 \over 32768 N^4} + {\rm O} \Big ( {1 \over N^5} \Big ) \bigg ) + {1 \over 2}.
\end{equation}

% \begin{align}
% \begin{split} \label{M 0N GinOE}
% M_{0,N}^{ \rm r } = N- M_{0,N}^{ \rm c }  = \mathbb{E} \mathcal{N}_\R  &
%  =   \sqrt{ \frac{2}{\pi} } \frac{\Gamma(N-\frac12)}{ (N-2)! } \pFq{2}{1}{1,-\frac12}{-N+\frac32}{\frac12}  + \mathbbm{1}_{ \{ N: \textup{ odd} \} } 
%   \\
%  &=  \frac12 + \sqrt{ \frac{2}{\pi} } \frac{ \Gamma(N+\frac12) }{ (N-1)! } \pFq{2}{1}{1,-\frac12}{N}{\frac12} . 
% \end{split}
% \end{align}
% Here, we have used the linear transformation of the hypergeometric function: 
%\begin{align} 
%\begin{split}
%\frac{ \sin(\pi(c-b)) }{\pi}  {}_2\textup{\textbf{F}}_1(1,b;c;z)&=\frac{ 1 }{ \Gamma(b) \Gamma(c-b) } (1-z)^{c-1-b} z^{1-c}
%\\
%&- \frac{1}{\Gamma(c-1)\Gamma(c-b)}{}_2\textup{\textbf{F}}_1(1,b;b-c+2;1-z),
%\end{split}
%\end{align}
% \begin{align} \label{2F1 linear trans}
% \begin{split}
%  \pFq{2}{1}{1,b}{c}{z}&= \frac{\pi}{ \sin( \pi(c-b) ) } \frac{ \Gamma(c) }{ \Gamma(b) \Gamma(c-b) } (1-z)^{c-1-b} z^{1-c}
% - \frac{ c-1 }{ b-c+1 } \pFq{2}{1}{1,b}{b-c+2}{1-z} ,
% \end{split}
% \end{align}
% see \cite[Eqs.(15.4.6), (15.8.4)]{NIST}.
% Furthermore, the zeroth moment $M_{0,N}^{ \rm r }$ can also be expressed in terms of the Jacobi polynomial:  
% \begin{equation}
% M_{0,N}^{ \rm r } = \sqrt{2} P_{N-2}^{ (1-N,3/2) }(3) + \mathbbm{1}_{ \{ N: \textup{ odd} \} }, \qquad (N>1). 
% \end{equation}

As a series, the Gauss hypergeometric function in (\ref{1.12}) is not terminating. Nonetheless, by considering the recurrence in $N$ implied by this formula, such terminating forms were obtained \cite[Cor.~5.3]{EKS94}
\begin{equation}
M_{0,N}^{ \rm r } =   \begin{cases}
\displaystyle   1+ \sqrt{2} \sum_{k=1}^{(N-1)/2} \frac{(4k-3)!!}{ (4k-2)!! } &N \textup{ odd},
 \smallskip
   \\
\displaystyle     \sqrt{2} \sum_{k=0}^{N/2-1} \frac{(4k-1)!!}{ (4k)!! } &N \textup{ even}.
\end{cases}
\end{equation}
(See also \cite[Prop.~2.1]{ABES23}.)
As a consequence, one reads off that for $N$ even $M_{0,N}^{ \rm r }$ is equal to $\sqrt{2}$ times a rational number, while for $N$ odd it is equal to 1 plus $\sqrt{2}$ times a rational number. As an aside, we remark that an arithmetic result of this type is also known for the expected number of real eigenvalues in the case of the product of two size $N$ GinOE matrices, where it is shown in \cite[\S 4.2]{FI16} to be of the form $\pi$ times a rational number for $N$ even, and 1 plus $\pi$ times a rational number for $N$ odd. The special function that appears here is not the Gauss hypergeometric function but rather a particular Meijer G-function, which was simplified to a finite series by Kumar \cite{Ku15}.

\subsection{New results}
Our first new result generalises (\ref{1.13}).

\begin{thm} \label{Thm_large N}
Let $m$ be any positive integer. 
Then as $N \to \infty$, we have 
\begin{equation} \label{M0N asymp}
M_{0,N}^{ \rm r } =  \sqrt{ \frac{2}{\pi} N } \bigg( 1+ \sum_{ l=1 }^{m-1} \frac{ a_l }{ N^l } +{\rm O}( N^{-m} ) \bigg)+\frac12, 
\end{equation}
where 
\begin{equation}
 a_l= -  \frac{1}{ \sqrt{\pi} } \frac{ \Gamma(l-\frac12) }{ l! } \frac{d^l }{dt^l} \bigg[ \Big( \frac{e^t-1}{t} \Big)^{-\frac32}  \frac{  e^{ 2t  } }{ e^t+1 }   \bigg]_{t=0}.
\end{equation}
Furthermore, as $N \to \infty$, we have 
\begin{equation}  \label{M2N asymp}
N^{-1} M_{2,N}^{ \rm r } =  \sqrt{ \frac{2}{\pi} N } \bigg( \frac{1}{3}+ \sum_{l=1}^{m-1} \frac{b_l}{N^l}  +{\rm O}( N^{-m} ) \bigg) + \frac{1}{2},
\end{equation} 
where 
\begin{equation}
b_l=-\frac{1}{2\sqrt{\pi} } \frac{ \Gamma(l-\frac32) }{  l! }  \frac{ d^l }{ dt^l } \bigg[ \Big( \frac{e^t-1}{t} \Big)^{-\frac{5}{2} }  \frac{ e^{2t}(e^t-3) }{ (e^t+1)^2 }   \bigg]_{t=0}. 
\end{equation} 
This gives for the first few terms
\begin{equation}
N^{-1} M_{2,N}^{ \rm r } =  \sqrt{ \frac{2}{\pi} N } \bigg( \frac13 + \frac{3}{8N} -\frac{43}{384N^2} + \frac{29}{1024N^3} + \frac{1859}{ 98304N^4 } +{\rm O} \Big (\frac{1}{N^5} \Big )  \bigg)+\frac12. 
\end{equation}
The three term recurrence (\ref{1.17}) then gives that for all $p \ge 0$, 
\begin{equation}\label{1.21}
N^{-p} M_{2p,N}^{ \rm r } = \sqrt{ \frac{2}{\pi} N } \bigg( {1 \over 2p + 1} + \sum_{l=1}^{m-1} {b_{l,p} \over N^l} + {\rm O}(N^{-m}) \bigg ) + {1 \over 2} + \sum_{l=1}^{p-1} {c_{l,p} \over N^l} 
\end{equation}
for certain coefficients $b_{l,p}$ and $c_{l,p}.$
\end{thm}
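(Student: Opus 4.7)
The plan is to establish the base cases~(\ref{M0N asymp}) and~(\ref{M2N asymp}) by asymptotic analysis of explicit hypergeometric integrals, and then deduce~(\ref{1.21}) by induction on $p$ using the three-term recurrence~(\ref{1.17}).

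To prove~(\ref{M0N asymp}), I would start from~(\ref{1.12}) and rewrite the Gauss hypergeometric function via its Euler integral representation as
\[
{}_2F_1(1,-\tfrac{1}{2};N;\tfrac{1}{2}) = (N-1)\int_0^1 (1-t)^{N-2}(1-t/2)^{1/2}\,dt,
\]
then pass to the Laplace form $(N-1)\int_0^\infty e^{-(N-1)u}\bigl((1+e^{-u})/2\bigr)^{1/2}\,du$ via the substitution $1-t=e^{-u}$. Multiplying by the prefactor $\sqrt{2/\pi}\,\Gamma(N+\tfrac12)/(N-1)!$ in~(\ref{1.12}), and absorbing the resulting gamma ratio $\Gamma(N+\tfrac12)/\Gamma(N-1)$ into the integrand via a Binet-type integral identity, produces a single Laplace integral whose small-$u$ behaviour, after the rescaling $u=t/N$, is governed by the function $((e^t-1)/t)^{-3/2}\,e^{2t}/(e^t+1)$. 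Applying Watson's lemma---in a form that handles the non-integrable factor $t^{-3/2}$ at $t=0$ by subtracting its singular part, and using $\int_0^\infty e^{-s}\,s^{l-3/2}\,ds=\Gamma(l-\tfrac12)$ to produce the numerical coefficient in each $a_l$---then yields~(\ref{M0N asymp}). The argument for~(\ref{M2N asymp}) is analogous, provided one first establishes a ${}_2F_1$ representation of $M_{2,N}^{\rm r}$ of the type alluded to in the abstract (which may be derived either directly from~(\ref{real density}) and~(\ref{1.11}) at $p=1$, or by applying~(\ref{1.17}) at $p=2$ in combination with~(\ref{1.12})); the generating function $((e^t-1)/t)^{-5/2}\,e^{2t}(e^t-3)/(e^t+1)^2$ reflects the shift in hypergeometric parameters, and its Taylor coefficients at the origin give the displayed low-order expansion.

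For~(\ref{1.21}) I would proceed by induction on $p$, using~(\ref{M0N asymp}) and~(\ref{M2N asymp}) as the base cases. Substituting an ansatz of the form
\[
N^{-q}M_{2q,N}^{\rm r} = \sqrt{\tfrac{2N}{\pi}}\Big(\alpha_q + \sum_{l\ge 1}\frac{b_{l,q}}{N^l}\Big) + \beta_q + \sum_{l\ge 1}\frac{c_{l,q}}{N^l}
\]
into~(\ref{1.17}), separating the $\sqrt{N}$-part from the polynomial-in-$1/N$ part (which decouple because they scale differently under the recurrence), and matching powers of $1/N$, yields linear recursions in $p$ for $\alpha_p$, $\beta_p$, $b_{l,p}$, and $c_{l,p}$. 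The leading recursion forces $\alpha_p=1/(2p+1)$, consistent with the rescaled real density tending to a uniform measure on $[-1,1]$ of total mass $\tfrac12$; the corresponding polynomial-part recursion gives $\beta_p\equiv \tfrac12$; and a straightforward degree count in the polynomial-in-$N$ part shows that $c_{l,p}=0$ for $l\ge p$. The polynomial nature of the coefficients in~(\ref{1.17}) ensures that the asymptotic form~(\ref{1.21}) is preserved at each induction step.

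The main obstacle is the first step: producing the combined Laplace-integral representation whose Watson expansion directly yields the closed-form generating functions $((e^t-1)/t)^{-3/2}\,e^{2t}/(e^t+1)$ and $((e^t-1)/t)^{-5/2}\,e^{2t}(e^t-3)/(e^t+1)^2$. This requires a careful merger of the Stirling-type expansion of $\Gamma(N+\tfrac12)/\Gamma(N-1)$ with the Laplace integrand coming from the Euler integral of the~${}_2F_1$, together with appropriate regularization of the factors $t^{-3/2}$ and $t^{-5/2}$ at $t=0$ via subtraction of their singular parts. Once this is done, the remainder of the argument is a routine application of Watson's lemma together with the induction via~(\ref{1.17}).
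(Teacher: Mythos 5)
Your overall architecture matches the paper's: establish the $p=0$ and $p=1$ cases from explicit hypergeometric representations together with the large-$N$ asymptotics of ${}_2F_1$ with a large denominator parameter, then propagate to all $p$ through the recurrence \eqref{1.17}. For $M_{0,N}^{\rm r}$ your Euler-integral/Watson's-lemma computation is a from-scratch reconstruction of the expansion the paper simply cites, namely \cite[Eq.(15.12.3)]{NIST} with $(a,b,c,\lambda)=(1,-\tfrac12,0,N)$, whose coefficient generating function $((e^t-1)/t)^{b-1}e^{t(1-c)}(1-z+ze^{-t})^{-a}$ is exactly your $((e^t-1)/t)^{-3/2}\,2e^{2t}/(e^t+1)$; that part is sound. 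Likewise your treatment of \eqref{1.21} by induction on the recurrence, with the $\sqrt{N}$ and integer-power scales decoupling, is what the paper intends, and your leading-order claim $\alpha_p=1/(2p+1)$ is consistent with the induced relation $2(2p+1)\alpha_p=4(2p-1)\alpha_{p-1}-2(2p-3)\alpha_{p-2}$.

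The gap is in the $p=1$ base case. Your fallback route --- obtaining a hypergeometric representation of $M_{2,N}^{\rm r}$ ``by applying \eqref{1.17} at $p=2$ in combination with \eqref{1.12}'' --- cannot work: \eqref{1.17} is second order in $p$, so it requires both $M_{0,N}^{\rm r}$ and $M_{2,N}^{\rm r}$ as initial data, and at $p=2$ it gives only one linear relation among $M_4$, $M_2$, $M_0$, determining nothing new. Your other suggestion (direct computation from the density at $p=1$) is the correct one, but it is precisely the nontrivial step you leave unexecuted. In the paper this is Proposition \ref{Prop_moment p2}: one specializes the ${}_3F_2$ formula of Theorem \ref{Thm_GinOE moments} at $p=1$ and applies the connection formula \eqref{linear transform 2F1} repeatedly, arriving not at a single ${}_2F_1$ but at the combination $\sqrt{2/\pi}\,\frac{\Gamma(N+3/2)}{(N-1)!}\big(\frac{1}{2N}\,{}_2F_1(2,-\tfrac12;N+1;\tfrac12)+\tfrac13\,{}_2F_1(1,-\tfrac32;N;\tfrac12)\big)+\tfrac{N}{2}$; the generating function appearing in $b_l$ then arises from combining the two separate NIST expansions (with $(a,b,c)=(2,-\tfrac12,1)$ and $(1,-\tfrac32,0)$), not from a single Watson integral. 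Without this identity, or an equivalent direct evaluation of $\int x^2\rho_N^{\rm r}(x)\,dx$, the expansion \eqref{M2N asymp} --- and hence the second initial condition needed to launch the recurrence --- is not established.
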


We remark that a generalisation of the asymptotic formula \eqref{M0N asymp} for the elliptic GinOE can also be found in \cite[Prop.~2.2]{BKLL23}. 
The terminating series $ \sum_{l=1}^{p-1} c_{l,p}  N^{-l} $ for the first $p=2,3,4,5$ are given by 
\begin{align} \label{terminating p2345}
\frac{1}{N}, \qquad \frac{3}{N} + \frac{4}{N^2}, \qquad \frac{6}{N} + \frac{22}{N^2} + \frac{24}{N^3}, \qquad \frac{10}{N} +\frac{70}{N^2} + \frac{200}{N^3} + \frac{192}{N^4},  
\end{align}
respectively. These coefficients can also be derived from the moment generating function, see \eqref{u 32 0} and \eqref{u 52 0}. 

\medskip 

Recall that the generalised hypergeometric function is given by the Gauss series
\begin{equation} \label{def of gen hypergeometric}
 \pFq{r}{s}{c_1,\dots,c_r}{d_1, \dots,d_s}{z} := \sum_{k=0}^\infty \frac{(c_1)_k \dots (c_r)_k }{ (d_1)_k \dots (d_s)_k } \frac{z^k}{k!};
\end{equation}
see e.g. \cite[Chapter 16]{NIST}, where it is assumed that the parameters are such that the series converges.
Using this function with $r=3$ and $s=2$ we next give an explicit formula for the even $2p$-th moments of both the real and complex eigenvalue density.
In earlier work the hypergeometric function ${}_3F_2$ has appeared in the evaluation of spectral moments of certain Hermitian unitary ensembles --- see \cite[Eqs.(3.9),(3.10)]{CMOS19} --- although there with extra structure of being terminating and furthermore identifiable in terms of certain orthogonal polynomials in the Askey scheme.

\begin{thm} \label{Thm_GinOE moments}
For all positive integers $N$ and $p$ we have 

    \begin{align}
\begin{split}  \label{M 2pN r}
M_{2p,N}^{ \rm r }  %& =  \frac{1}{(N-2)!}  \sum_{k=0}^\infty \frac{1}{ 2^{k+\frac12}  }  \frac{  \Gamma(k+p+\frac12)  \Gamma(N+p-k-\frac12)  }{  \Gamma(k+\frac12) \, \Gamma(p-k+\frac32) }   +  2^p \frac{ \Gamma(p+N/2)  }{ \Gamma(N/2) } \mathbbm{1}_{ \{ N: \textup{ odd} \} }
%\\
&= \frac{1}{ \sqrt{2\pi} } \frac{ 2 }{ 2p+1 }  \frac{ \Gamma(N+p-\frac12) }{ (N-2)! }  \pFq{3}{2}{1,-\frac12-p,\frac12+p}{ \frac12, \frac32-N-p }{\frac12}  +  2^p \frac{ \Gamma(p+N/2)  }{ \Gamma(N/2) } \mathbbm{1}_{ \{ N: \textup{ odd} \} }
\end{split}
\end{align} 
and 
\begin{align}
\begin{split}  \label{M 2pN c}
M_{2p,N}^{ \rm c }  %&=  - \frac{1}{(N-2)!}  \sum_{k=0}^\infty \frac{1}{ 2^{k+\frac12}  }  \frac{  \Gamma(k+p+\frac12)  \Gamma(N+p-k-\frac12)  }{  \Gamma(k+\frac12) \, \Gamma(p-k+\frac32) }   +  2^p \frac{ \Gamma(p+N/2)  }{ \Gamma(N/2) } \mathbbm{1}_{ \{ N: \textup{ even} \} }
%\\
&= -\frac{1}{ \sqrt{2\pi} } \frac{ 2 }{ 2p+1 }  \frac{ \Gamma(N+p-\frac12) }{ (N-2)! }  \pFq{3}{2}{1,-\frac12-p,\frac12+p}{ \frac12, \frac32-N-p }{\frac12}  +  2^p \frac{ \Gamma(p+N/2)  }{ \Gamma(N/2) } \mathbbm{1}_{ \{ N: \textup{ even} \} }. 
\end{split}
\end{align}

\end{thm}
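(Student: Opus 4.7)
The plan is to establish \eqref{M 2pN r} by direct integration of the explicit density \eqref{real density} against $x^{2p}$, and then to derive \eqref{M 2pN c} as an immediate corollary using \eqref{real+cplx} together with \eqref{spectral moments for real and complex}. Splitting the density according to its two summands gives $M_{2p,N}^{\rm r}=I_1+I_2$, with $I_1$ the contribution from $\Gamma(N-1,x^2)$ and $I_2$ the contribution from $|x|^{N-1}e^{-x^2/2}\gamma(\tfrac{N-1}{2},\tfrac{x^2}{2})$.

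The first integral $I_1$ is handled by swapping the order of integration inside $\Gamma(N-1,x^2)=\int_{x^2}^{\infty} t^{N-2}e^{-t}\,dt$. The inner $x$-integral becomes $\int_0^{\sqrt{t}}x^{2p}\,dx = t^{p+1/2}/(2p+1)$, and the remaining $t$-integral is a gamma function, yielding
\[
I_1 = \frac{1}{\sqrt{2\pi}}\,\frac{2}{2p+1}\,\frac{\Gamma(N+p-\tfrac12)}{(N-2)!},
\]
which is precisely the overall prefactor of the ${}_3F_2$ in \eqref{M 2pN r}.

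For $I_2$ I would use the integral representation $\gamma(\alpha,u)=u^{\alpha}\int_0^1 s^{\alpha-1}e^{-us}\,ds$ with $\alpha=(N-1)/2$ and $u=x^2/2$. After interchanging orders, the $x$-integral is Gaussian and leaves
\[
I_2 = \frac{2^{p+N-3/2}\,\Gamma(N+p-\tfrac12)}{\sqrt{2\pi}\,(N-2)!}\int_0^1 s^{(N-3)/2}(1+s)^{-p-N+1/2}\,ds.
\]
The change of variable $u=s/(1+s)$ turns the $s$-integral into the incomplete Beta function $B_{1/2}\bigl(\tfrac{N-1}{2},\tfrac{N}{2}+p\bigr)$, which admits a standard ${}_2F_1$ representation at argument $\tfrac12$. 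The main task is then to recast $I_1+I_2$ into the exact form displayed in \eqref{M 2pN r}: expanding the Beta integral as a power series in $\tfrac12$, combining with $I_1$ as the $k=0$ contribution after a suitable shift of index, and applying a contiguous relation to lift the resulting ${}_2F_1$ to the stated ${}_3F_2\bigl(1,-\tfrac12-p,\tfrac12+p;\tfrac12,\tfrac32-N-p;\tfrac12\bigr)$. I expect the delicate step to be producing the parity-dependent term $2^{p}\Gamma(p+N/2)/\Gamma(N/2)\,\mathbbm{1}_{\{N:\textup{ odd}\}}$: it should arise either from the terminating closed form $\gamma(m,u)=(m-1)!\bigl(1-e^{-u}\sum_{k=0}^{m-1}u^{k}/k!\bigr)$ available only when $(N-1)/2$ is a positive integer, or from a pole cancellation in the ${}_2F_1$ at half-integer denominator parameters that is absent when $N$ is even.

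With \eqref{M 2pN r} in hand, \eqref{M 2pN c} follows without further work. Combining \eqref{real+cplx} with \eqref{spectral moments for real and complex} yields
\[
M_{2p,N}^{\rm c} = 2^{p}\,\frac{\Gamma(p+N/2)}{\Gamma(N/2)} - M_{2p,N}^{\rm r},
\]
and writing $\mathbbm{1}_{\{N:\textup{ odd}\}}=1-\mathbbm{1}_{\{N:\textup{ even}\}}$ converts the right-hand side into precisely the expression in \eqref{M 2pN c}.
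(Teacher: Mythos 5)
Your route---direct termwise integration of the explicit density \eqref{real density}---is genuinely different from the paper's, which instead integrates $|x|^{2p}$ against the generating function $\sum_{N\ge 1}\rho_N^{\rm r}(x)z^N=\mathcal F(z,x)$ of \cite[Cor.~4.1]{EKS94}, expands the error functions in series, and reads off the coefficient of $z^N$; there the parity term appears automatically because one piece of $\mathcal F(z,x)$ contributes only odd powers of $z$. The computations you actually carry out are correct: $I_1$ does evaluate to $\frac{1}{\sqrt{2\pi}}\frac{2}{2p+1}\frac{\Gamma(N+p-\frac12)}{(N-2)!}$ (which is the prefactor in \eqref{M 2pN r} times the $k=0$ term of the ${}_3F_2$), and the reduction of $I_2$ to $\frac{2^{p+N-3/2}\,\Gamma(N+p-\frac12)}{\sqrt{2\pi}\,(N-2)!}\,B_{1/2}\bigl(\frac{N-1}{2},p+\frac N2\bigr)$ checks out. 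Your deduction of \eqref{M 2pN c} from \eqref{M 2pN r} via \eqref{real+cplx} and \eqref{spectral moments for real and complex} is also fine and coincides with the paper's.

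The gap is that the argument stops exactly where the content of the theorem lies. Converting $B_{1/2}\bigl(\frac{N-1}{2},p+\frac N2\bigr)$, i.e.\ ${}_2F_1\bigl(\frac{N-1}{2},1-p-\frac N2;\frac{N+1}{2};\frac12\bigr)$ up to an elementary factor, into the $k\ge1$ tail of the stated ${}_3F_2\bigl(1,-\frac12-p,\frac12+p;\frac12,\frac32-N-p;\frac12\bigr)$ \emph{plus} the term $2^p\Gamma(p+N/2)/\Gamma(N/2)\,\mathbbm{1}_{\{N \textup{ odd}\}}$ is not a routine "contiguous relation": contiguous relations shift the parameters of a ${}_2F_1$ but do not by themselves manufacture a ${}_3F_2$ with an extra unit numerator parameter, let alone a parity-dependent additive term. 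You offer two conjectural mechanisms for the parity term, but neither is carried out, and the first is problematic as stated: the split $\gamma(a,u)=\Gamma(a)-\Gamma(a,u)$ is available for \emph{all} $N\ge2$, and one checks via the duplication formula that the $\Gamma(a)$ piece contributes exactly $2^p\Gamma(p+N/2)/\Gamma(N/2)$ for \emph{every} parity of $N$, so the parity dependence must be hiding in the remaining $\Gamma(a,u)$ contribution, which you have not analysed. What is actually required at argument $\tfrac12$ is a quadratic-transformation-type evaluation whose closed form depends on whether $\frac{N-1}{2}$ is an integer or a half-integer; this is precisely the delicate step, and it is missing from the proposal.
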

The definition (\ref{1.11}) of the even integer moment $M_{2p,N}^{\rm r}$ can be extended to all complex $p$ with Re$(p) > - 1/2$ by rewriting $M_{2p,N}^{\rm r}$ as 
\begin{equation}\label{1.11a}
M_{2p,N}^{\rm r} = \int_{\mathbb R} |x|^{2p} \rho_N^{\rm r}(x) \, dx = 2 \int_0^\infty x^{2p} \rho_N^{\rm r}(x) \, dx.
\end{equation}
The evaluation formula (\ref{M 2pN r}) again remains valid.

\begin{prop}\label{P1.3}
Define $M_{2p,N}^{\rm r}$ for general $\re (p) > - 1/2$ by (\ref{1.11a}). %As for the case of $p$ a non-negative integer, it is evaluated in terms of a ${}_3 F_2$ function by (\ref{M 2pN r}). 
The evaluation formula (\ref{M 2pN r}) can be continued off the positive integers to remain valid throughout this region of the complex plane.
\end{prop}

\begin{rem}
Let $p=q+1/2$ for $q \ge 0$ a non-negative integer. The series 
(\ref{def of gen hypergeometric}) defining the ${}_3F_2$ function in (\ref{M 2pN r}) is ill-defined as the parameters are such that the indeterminant zero divided by zero occurs. For the series to be well defined, the limit $q$ approaches a non-negative integer must be taken.
\end{rem}

It is possible to deduce that the three term recurrence (\ref{1.17}) for the moments is valid not just for the even integer moments, but the complex moments too, and to use this to deduce a three term recurrence specifically for the ${}_3F_2$ function appearing in (\ref{M 2pN r}). To this end, we make use of a  third order differential equation satisfied by $\rho_N^{\rm r}(x)$, which is of independent interest.

\begin{prop}\label{P1.5}
The density $\rho_N^{ \rm r }$ of real eigenvalues given in \eqref{real density} satisfies the differential equation
\begin{equation} \label{DE of real density}
\mathcal A_N[x] \rho_N^{ \rm r } (x)=0, \quad
\mathcal A_N[x] :=
\Big( x^2 \partial_x^3 + x ( 3 x^2-3N+4) \partial_x^2 + ( 2 x^2-2N+1 ) (  x^2-N+2 ) \partial_x \Big).
\end{equation}
\end{prop}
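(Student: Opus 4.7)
The operator factors as $\mathcal{A}_N[x] = \mathcal{B}_N[x]\partial_x$, where
$$\mathcal{B}_N[x] := x^2\partial_x^2 + x(3x^2 - 3N+4)\partial_x + (2x^2-2N+1)(x^2-N+2),$$
so the claim reduces to $\mathcal{B}_N[x](\rho_N^{\rm r})'(x) = 0$. Since $\rho_N^{\rm r}$ is even while $\mathcal{A}_N[x]$ sends even functions to odd ones, it suffices to verify the identity for $x > 0$, where $|x|^{N-1} = x^{N-1}$. My plan is to decompose $\rho_N^{\rm r}(x) = (\Phi_1(x) + \Phi_2(x))/(\sqrt{2\pi}(N-2)!)$ with
$$\Phi_1(x) := \Gamma(N-1,x^2), \qquad \Phi_2(x) := 2^{(N-3)/2}e^{-x^2/2}x^{N-1}\gamma\!\bigl(\tfrac{N-1}{2},\tfrac{x^2}{2}\bigr),$$
and $\Phi := \Phi_1 + \Phi_2$, then write down simple low-order differential relations for these functions and algebraically eliminate $\Phi_1$ and $\Phi_2$.

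Using $\tfrac{d}{dx}\Gamma(a,x^2) = -2x^{2a-1}e^{-x^2}$ at $a = N-1$ gives $\Phi_1'(x) = -2x^{2N-3}e^{-x^2}$, and logarithmic differentiation then yields the first-order relation
$$x\Phi_1'' + (2x^2 - 2N + 3)\Phi_1' = 0. \qquad \mathrm{(I)}$$
Using $\tfrac{d}{dx}\gamma(\tfrac{N-1}{2},\tfrac{x^2}{2}) = 2^{-(N-3)/2}x^{N-2}e^{-x^2/2}$ and the product rule applied to the three factors defining $\Phi_2$ leads, after simplification, to
$$x\Phi_2' + (x^2 - N + 1)\Phi_2 = x^{2N-2}e^{-x^2} = -\tfrac{x}{2}\Phi_1'. \qquad \mathrm{(II)}$$
Using $x\Phi_1' + x\Phi_2' = x\Phi'$ then rearranges (II) into
$$x\Phi' = \tfrac{x}{2}\Phi_1' - (x^2 - N + 1)\Phi_2. \qquad \mathrm{(III)}$$

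Differentiating (III) and substituting (I) to eliminate $x\Phi_1''$ produces the second-order identity
$$x\Phi'' + (x^2 - N + 2)\Phi' + 2x\Phi_2 = 0, \qquad \mathrm{(IV)}$$
which expresses $\Phi_2 = -\bigl(x\Phi'' + (x^2-N+2)\Phi'\bigr)/(2x)$. Using $\Phi_1' = \Phi' - \Phi_2'$ in (III) also yields
$$\Phi_2' = -\Phi' - \tfrac{2}{x}(x^2-N+1)\Phi_2. \qquad \mathrm{(V)}$$
The final step is to differentiate (IV), substitute (V) for $\Phi_2'$ and then (IV) for $\Phi_2$, and multiply through by $x$; a direct coefficient comparison gives exactly
$$x^2\Phi''' + x(3x^2 - 3N + 4)\Phi'' + (2x^2 - 2N + 1)(x^2 - N + 2)\Phi' = 0,$$
which is $\mathcal{A}_N[x]\Phi = 0$. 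The only genuine obstacle is the algebraic bookkeeping in this last elimination: all $\Phi_1$- and $\Phi_2$-specific contributions must telescope, leaving only the operator $\mathcal{A}_N[x]$ acting on $\Phi$. Dividing by $\sqrt{2\pi}(N-2)!$ then completes the proof.
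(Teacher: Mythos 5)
Your proof is correct and follows essentially the same route as the paper: both decompose the density into the $\Gamma(N-1,x^2)$ piece and the $2^{(N-3)/2}e^{-x^2/2}x^{N-1}\gamma(\tfrac{N-1}{2},\tfrac{x^2}{2})$ piece, exploit the closed forms of their derivatives, and eliminate the two auxiliary functions to obtain the stated third-order equation (the paper, following the Witte--Forrester strategy it cites, organizes the elimination as solving a $2\times 2$ linear system for the auxiliary functions in terms of $f=\Phi'$ and $f'$, and defers the algebra to computer assistance). Your sequential elimination via the relations (I)--(V) is a fully hand-checkable variant of the same computation, and the final coefficient comparison does indeed yield $x^2\Phi'''+x(3x^2-3N+4)\Phi''+(2x^2-2N+1)(x^2-N+2)\Phi'=0$.
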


\begin{cor}\label{C1.6}
The three term recurrence (\ref{1.17}) remains valid for complex values of $p$ such the terms are well defined. Also, as a function of complex $p$
\begin{align}\label{1.28}
\begin{split}
 & (2N+2p-1)  (2N+2p+1)    \pFq{3}{2}{1,-\frac52-p,\frac52+p}{ \frac12, -\frac12-N-p }{\frac12}  
 \\
&\qquad =  (6p+4N+7) (2N+2p-1)      \pFq{3}{2}{1,-\frac32-p,\frac32+p}{ \frac12, \frac12-N-p }{\frac12} 
\\
&\qquad \quad - 2 (2p+N)(2N+2p+1)   \pFq{3}{2}{1,-\frac12-p,\frac12+p}{ \frac12, \frac32-N-p }{\frac12} . 
\end{split}
\end{align}
%\begin{multline}
%  (2N+2p-1)  (2N+2p+1)    \pFq{3}{2}{1,-\frac52-p,\frac52+p}{ \frac12, -\frac12-N-p }{\frac12} 
%=  (6p+4N+7) (2N+2p-1)      \pFq{3}{2}{1,-\frac32-p,\frac32+p}{ \frac12, %\frac12-N-p }{\frac12} 
%\\
%- 2 (2p+N)(2N+2p+1)   \pFq{3}{2}{1,-\frac12-p,\frac12+p}{ \frac12, \frac32-N-p }{\frac12} . 
%\end{multline}

\end{cor}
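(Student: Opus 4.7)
The plan is to derive the recurrence \eqref{1.17} for complex $p$ directly from the third order ODE of Proposition \ref{P1.5}, then obtain \eqref{1.28} by substituting \eqref{M 2pN r} (extended to complex $p$ by Proposition \ref{P1.3}). Using the evenness of $\rho_N^{\rm r}$, write $M_{2p,N}^{\rm r} = 2\int_0^\infty x^{2p}\rho_N^{\rm r}(x)\,dx$, valid for $\re(p) > -1/2$. A parity check shows that each of the three summands of $\mathcal A_N[x]\rho_N^{\rm r}$ is an odd function of $x$; hence integrating \eqref{DE of real density} against the even weight $|x|^{2p}$ over $\mathbb R$ yields nothing useful. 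Instead I multiply by the odd weight $x^{2p+1}$ and integrate over $(0,\infty)$, producing a nontrivial linear relation among the half-moments $I_q := \int_0^\infty x^q \rho_N^{\rm r}\,dx = M_{q,N}^{\rm r}/2$.

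Three successive integrations by parts in each of the three summands transfer all derivatives off $\rho_N^{\rm r}$. Boundary terms at $x = \infty$ vanish by the super-exponential decay of $\rho_N^{\rm r}$ and its derivatives visible in \eqref{real density}. Those at $x = 0$ vanish because $\rho_N^{\rm r}$ is smooth and even there --- the apparent singularity $|x|^{N-1}$ in \eqref{real density} is cancelled by $\gamma((N-1)/2,x^2/2) = O(x^{N-1})$, while $\Gamma(N-1,x^2)$ is a polynomial in $x^2$ times a Gaussian --- so $(\rho_N^{\rm r})'(0) = (\rho_N^{\rm r})'''(0) = 0$, and the remaining positive powers of $x$ kill every boundary contribution once $\re(p) > -1/2$. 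Collecting terms and applying the key factorisation $(2p+2)(2p+3N-1) + (2N^2 - 5N + 2) = (2p+N)(2p+2N+1)$ yields
\begin{equation*}
2(2p+5) I_{2p+4} = (2p+3)(6p+4N+7) I_{2p+2} - (2p+1)(2p+N)(2p+2N+1) I_{2p},
\end{equation*}
which on doubling is \eqref{1.17} with $p$ shifted to $p+2$. Since both sides of \eqref{1.17} are meromorphic in $p$ via \eqref{M 2pN r}, analytic continuation extends the recurrence to its full domain of definition.

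For \eqref{1.28}, substitute \eqref{M 2pN r} into \eqref{1.17} with $p \to p+2$. The ``GUE-type'' summand $G_p := 2^p\Gamma(p+N/2)/\Gamma(N/2)\,\mathbbm{1}_{\{N\,\text{odd}\}}$ satisfies \eqref{1.17} on its own: its two-term relation $G_p = (N+2p-2)G_{p-1}$ reduces \eqref{1.17} to the polynomial identity $2(2p+1)(N+2p-2) = (2p-1)(6p+4N-5) - (2p-3)(2p+2N-3)$, easily verified by expansion. Consequently the ${}_3F_2$ part of \eqref{M 2pN r} also satisfies \eqref{1.17} on its own. Factoring out the $p$-dependent prefactor $\tfrac{2\,\Gamma(N+p-1/2)}{(2p+1)\sqrt{2\pi}\,(N-2)!}$ from each of the three shifts via the Gamma-ratio $\Gamma(N+p-1/2)/\Gamma(N+p-3/2) = (2N+2p-3)/2$ (and its analogue one step further), then clearing the quadratic denominator $(2N+2p-3)(2N+2p-5)$, produces \eqref{1.28} after the index shift.

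The main obstacle is the integration-by-parts bookkeeping: nine intermediate terms arise before collection, every boundary contribution at $x = 0$ must be confirmed to vanish at the relevant fractional power, and the coefficient of $I_{2p}$ must be collected into the displayed factored form. Once the single factorisation identity above is noticed, the remaining algebra --- both for the IBP collapse and the subsequent reduction to \eqref{1.28} --- is routine.
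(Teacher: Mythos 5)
Your proposal is correct and follows essentially the same route as the paper: multiply the differential equation of Proposition \ref{P1.5} by an odd power of $x$, integrate over $(0,\infty)$ by parts with vanishing boundary terms, analytically continue in $p$, and then obtain \eqref{1.28} by checking that the term $2^p\Gamma(p+N/2)/\Gamma(N/2)$ satisfies \eqref{1.17} on its own so that the ${}_3F_2$ part must as well, after which the Gamma-ratio prefactors are cleared. Your added details (the parity argument forcing the weight $x^{2p+1}$, the boundary analysis at $x=0$, and the verified polynomial identity $2(2p+1)(N+2p-2)=(2p-1)(6p+4N-5)-(2p-3)(2p+2N-3)$) are all consistent with, and slightly more explicit than, the paper's argument.
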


Another consequence of \eqref{DE of real density} is related to the differential equation for the Fourier-Laplace, or equivalently for the (positive integer) moment generating function, as well as for the Stieltjes transform.

\begin{cor} \label{Cor_DE of MGF}
Let
\begin{equation}
u(t) := \int_\R e^{tx} \rho_N^{ \rm r }(x)\,dx. 
\end{equation} 
This satisfies the fourth order linear differential equation
\begin{equation} \label{DE MGF}
D_N[t]\, u(t)=0,  
\end{equation}
where 
\begin{equation}
D_N[t]:= 2t \, \partial_t^4 - ( 3 t^2-8 ) \, \partial_t^3 + t(t^2 - 4 N   -13 ) \, \partial_t^2 + (  (3 N+2) t^2 -8N-8  )\, \partial_t +(2N^2+N)t.  
\end{equation} 
Furthermore, introduce the Stieltjes transform of the density
\begin{equation}
W(t) := \int_\R { \rho_N^{ \rm r }(x) \over t - x} \,dx, \qquad t \notin \R.
\end{equation} 
With $\mathcal A_N[t]$ the differential operator specified in (\ref{DE of real density}), but with respect to $t$ rather than $x$, we have that $W(t)$ satisfies the inhomogeneous differential equation
\begin{equation}\label{1.33}
\mathcal A_N[t] W(t) = (1 + 4 N - 2 t^2) M_{0,N}^{\rm r} - 6 M_{N,2}^{\rm r}.
\end{equation} 
\end{cor}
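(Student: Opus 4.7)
My plan is to derive both equations from Proposition \ref{P1.5} by integrating the identity $\mathcal A_N[x] \rho_N^{\rm r}(x) = 0$ against the appropriate kernel ($e^{tx}$ or $(t-x)^{-1}$) and then applying integration by parts together with elementary algebra. The Gaussian-type decay of $\rho_N^{\rm r}$ at $\pm\infty$ ensures that all boundary terms vanish throughout.

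For the moment generating function equation \eqref{DE MGF}, I write $\mathcal A_N[x] = p_3(x)\partial_x^3 + p_2(x)\partial_x^2 + p_1(x)\partial_x$ with polynomial coefficients read off from \eqref{DE of real density}. Multiplying $\mathcal A_N[x] \rho_N^{\rm r} = 0$ by $e^{tx}$, integrating over $x \in \R$, and performing $k$-fold integration by parts on the term containing $\partial_x^k \rho_N^{\rm r}$ gives
\begin{equation*}
0 = \sum_{k=1}^{3} (-1)^k \int_\R \bigl(e^{tx} p_k(x)\bigr)^{(k)} \rho_N^{\rm r}(x)\,dx.
\end{equation*}
Expanding each $(e^{tx} p_k(x))^{(k)}$ by Leibniz and invoking the identity $\int e^{tx} x^m \rho_N^{\rm r}(x)\,dx = u^{(m)}(t)$ converts the right-hand side into a polynomial combination $\sum_{j=0}^{4} \alpha_j(t) u^{(j)}(t)$. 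A direct computation should confirm that these coefficients reproduce $-D_N[t]$.

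For the Stieltjes transform equation \eqref{1.33}, the starting point is $\partial_t^k W(t) = \int \partial_x^k \rho_N^{\rm r}(x)/(t-x)\,dx$, which follows from $\partial_t (t-x)^{-1} = -\partial_x (t-x)^{-1}$ and integration by parts. The key algebraic trick is the resolvent decomposition $p_k(t)/(t-x) = p_k(x)/(t-x) + q_k(t,x)$, where $q_k(t,x) := (p_k(t) - p_k(x))/(t-x)$ is a polynomial in $(t,x)$. This gives
\begin{equation*}
\mathcal A_N[t] W(t) = \int \frac{\mathcal A_N[x] \rho_N^{\rm r}(x)}{t-x}\,dx + \sum_{k=1}^{3} \int q_k(t,x) \, \partial_x^k \rho_N^{\rm r}(x)\,dx.
\end{equation*}
The first integral vanishes by Proposition \ref{P1.5}. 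Each remaining integral is polynomial-weighted and reduces, via integration by parts, to a linear combination of the moments $\int \rho_N^{\rm r}\,dx = M_{0,N}^{\rm r}$ and $\int x^2 \rho_N^{\rm r}\,dx = M_{2,N}^{\rm r}$, with the odd moments $\int x\, \rho_N^{\rm r} = \int x^3 \rho_N^{\rm r} = 0$ dropping out by the evenness of $\rho_N^{\rm r}$.

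The whole argument is essentially mechanical once the above setup is in place; I do not expect a serious conceptual obstacle. The main care needed is algebraic bookkeeping: for \eqref{DE MGF}, collecting the coefficient of each $u^{(j)}(t)$ across the three contributions $k=1,2,3$; and for \eqref{1.33}, correctly computing the three polynomials $q_k(t,x)$ and tracking the surviving integration-by-parts contributions to assemble them into the compact right-hand side $(1 + 4N - 2t^2)M_{0,N}^{\rm r} - 6\,M_{2,N}^{\rm r}$.
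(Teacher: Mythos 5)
Your proposal is correct and follows essentially the same route as the paper: for \eqref{DE MGF}, multiply the density ODE by $e^{tx}$ and integrate by parts, matching powers of $x$ against derivatives of $u$; for \eqref{1.33}, use $\partial_t^k W(t)=\int \partial_x^k\rho_N^{\rm r}(x)/(t-x)\,dx$ and split each coefficient as $p_k(t)=p_k(x)+(t-x)q_k(t,x)$ so that the singular part vanishes by Proposition \ref{P1.5} and the polynomial remainders reduce to $M_{0,N}^{\rm r}$ and $M_{2,N}^{\rm r}$ (odd moments vanishing by parity). The bookkeeping indeed checks out, e.g. $-\partial_x q_1$ contributes $(-2t^2+4N-5)M_{0,N}^{\rm r}-6M_{2,N}^{\rm r}$ and $\partial_x^2 q_2$ contributes $6M_{0,N}^{\rm r}$, assembling to the stated right-hand side.
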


\medskip

\begin{rem}
In \cite[Cor. 1.5]{By23b}, it was found that the moment generating function $u(t)$ satisfies a seventh-order differential equation.  
Indeed, it can be observed that the differential equation in \cite[Cor. 1.5]{By23b} can be further simplified to 
\begin{align}
\begin{split}
0&= (t^3\,\partial_t^3-6t^2\,\partial_t^2+15t\,\partial_t-15) \circ \,D_N[t] u (t)
\\
&=  (t \,\partial_t-a)\circ(t \,\partial_t-b)\circ(t \,\partial_t-c) \circ D_N[t] \, u (t) , 
\end{split}
\end{align}
where $( a,b,c )$ is a permutation of $(1,3,5)$.
\end{rem}

\medskip

The proofs of the above results are given in Section \ref{S2}. In Section \ref{S3} we link up the large $N$ asymptotic expansion of the moments Theorem \ref{Thm_large N} with large $N$ asymptotic expansions that can be deduced for the Fourier-Laplace transform $u(t)$ and the Stieltjes transform $W(t)$. 

\section{Proofs}\label{S2}

\subsection{Proof of Theorem \ref{Thm_large N}}
To begin we
recall a particular asymptotic formula of ${}_2F_1$, telling us that as $\lambda \to \infty$ ($\lambda \in \R$), 
\begin{equation}\label{2F1 asymp}
\pFq{2}{1}{a,b}{c+\lambda}{z}	= \frac{ \Gamma(c+\lambda)  }{ \Gamma(c-b+\lambda) } \sum_{s=0}^{m-1} q_s(z) \frac{\Gamma(b+s)}{\Gamma(b)} \,\lambda^{-s-b}+{\rm O}(\lambda^{-m-b});
	\end{equation}	
see \cite[Eq.(15.12.3)]{NIST}.	 Here $q_0(z)=1$ and $q_s(z)$ when $s=1,2,\ldots$ are defined by the generating function
	\begin{equation}
		\Big( \frac{e^t-1}{t} \Big)^{b-1} e^{t (1-c) } (1-z+ze^{-t})^{-a} =\sum_{s=0}^\infty q_s(z) \, t^s.
	\end{equation}
Using this with $a=1,b=-1/2, c=0$ and $\lambda=N$ in (\ref{1.12}) gives (\ref{M0N asymp}).

In preparation for deducing the large $N$ asymptotic form of $M_{2,N}^{\rm r}$ an evaluation formula analogous to (\ref{1.12}) is required.

\begin{prop} \label{Prop_moment p2}
We have 
\begin{align}\label{AM2}
\begin{split}
M_{2,N}^{ \rm r } %&= \sqrt{ \frac{2}{\pi}  } \frac{4}{15}  \frac{\Gamma(N+\tfrac32) }{(N-2)!} \bigg( 2\,\pFq{2}{1}{2,N+\frac32}{ \frac72 }{2}-   \pFq{2}{1}{1,N+\frac32}{ \frac72 }{2} \bigg)    +  N\, \mathbbm{1}_{ \{ N: \textup{ odd} \} }. 
%\\
&=  \sqrt{ \frac{2}{\pi} } \frac{\Gamma(N+\frac32)}{(N-1)!} \bigg( \frac{1}{2N}  \pFq{2}{1}{2,-\frac12}{N+1}{\frac12} + \frac{1}{3}  \pFq{2}{1}{1,-\frac32}{N}{\frac12} \bigg) + \frac{N}{2}. 
\end{split}
\end{align}
\end{prop}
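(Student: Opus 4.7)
The plan is to compute $M_{2,N}^{\rm r}=2\int_0^\infty x^2\,\rho_N^{\rm r}(x)\,dx$ by substituting the density (\ref{real density}) and splitting
\[
M_{2,N}^{\rm r} = \frac{2}{\sqrt{2\pi}\,(N-2)!}\bigl(I_1 + 2^{(N-3)/2}\,I_2\bigr),
\]
with $I_1 = \int_0^\infty x^2\,\Gamma(N-1,x^2)\,dx$ and $I_2 = \int_0^\infty x^{N+1}\,e^{-x^2/2}\,\gamma\bigl(\tfrac{N-1}{2},\tfrac{x^2}{2}\bigr)\,dx$. For $I_1$, a Fubini swap applied to the defining integral $\Gamma(N-1,x^2)=\int_{x^2}^\infty t^{N-2}e^{-t}\,dt$ immediately gives $I_1=\tfrac{1}{3}\Gamma(N+\tfrac{1}{2})$.

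For $I_2$, I would substitute $u=x^2/2$ and apply Fubini a second time to rewrite
\[
I_2 = 2^{N/2}\int_0^\infty s^{(N-3)/2}\,e^{-s}\,\Gamma\bigl(\tfrac{N}{2}+1,\,s\bigr)\,ds.
\]
Applying the contiguous recurrence $\Gamma(\tfrac{N}{2}+1,s)=\tfrac{N}{2}\,\Gamma(\tfrac{N}{2},s)+s^{N/2}e^{-s}$ then splits $I_2$ cleanly: the $s^{N/2}e^{-s}$ part evaluates to $2^{(1-N)/2}\Gamma(N-\tfrac{1}{2})$, while the $\Gamma(\tfrac{N}{2},s)$ part is (after one more order-swap) precisely the incomplete-gamma integral appearing in the derivation of $M_{0,N}^{\rm r}$ that produces (\ref{1.12}). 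Assembling everything and using $(2N-1)\Gamma(N-\tfrac{1}{2})=2\Gamma(N+\tfrac{1}{2})$ yields the intermediate reduction
\[
M_{2,N}^{\rm r} = N\,M_{0,N}^{\rm r} - \frac{4\,\Gamma(N+\tfrac{1}{2})}{3\sqrt{2\pi}\,(N-2)!}.
\]
After substituting (\ref{1.12}), the $\tfrac{N}{2}$ constant in (\ref{AM2}) arises naturally from the $N\cdot\tfrac{1}{2}$ piece of $N\,M_{0,N}^{\rm r}$, leaving a single ${}_2F_1(1,-\tfrac{1}{2};N;\tfrac{1}{2})$ plus an elementary remainder.

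The remaining step is to rewrite that single ${}_2F_1$ as demanded by (\ref{AM2}); equivalently, to prove the three-term identity
\[
N\,{}_2F_1\bigl(1,-\tfrac{1}{2};N;\tfrac{1}{2}\bigr) - \tfrac{2(N-1)}{3} = \tfrac{2N+1}{4N}\,{}_2F_1\bigl(2,-\tfrac{1}{2};N+1;\tfrac{1}{2}\bigr) + \tfrac{2N+1}{6}\,{}_2F_1\bigl(1,-\tfrac{3}{2};N;\tfrac{1}{2}\bigr).
\]
I would prove this by applying two of Gauss's contiguous relations, namely one that shifts $(a,c)\mapsto(a+1,c+1)$ (producing the ${}_2F_1(2,-\tfrac{1}{2};N+1;\tfrac{1}{2})$ piece with a polynomial correction) and one that shifts $b\mapsto b-1$ (producing the ${}_2F_1(1,-\tfrac{3}{2};N;\tfrac{1}{2})$ piece), linearly combined to cancel by-product hypergeometric functions. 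The main technical obstacle is this identity: because the negative half-integer $b$'s do not truncate the series, one cannot verify the identity by matching finitely many polynomial coefficients, and keeping careful track of the several parameter shifts in the contiguous relations is error-prone. A direct evaluation at $N=2$, where each ${}_2F_1$ reduces to an algebraic number in $\sqrt{2}$ via the elementary closed forms for ${}_2F_1(1,b;2;z)$ and ${}_2F_1(2,b;3;z)$, provides a useful sanity check that both pins down the form of the identity and guards against algebraic slips.
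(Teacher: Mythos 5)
Your argument is correct, and it reaches \eqref{AM2} by a genuinely different route from the paper. The paper's proof does not integrate the density at all: it temporarily assumes the $p=1$ case of Theorem \ref{Thm_GinOE moments} (the ${}_3F_2$ formula, established later via the Edelman--Kostlan--Shub generating function for $\sum_N \rho_N^{\rm r}(x)z^N$), splits the resulting series into two ${}_2F_1$'s evaluated at argument $-1$, and then pushes them through the connection formula \eqref{linear transform 2F1} (argument $-1 \to 2 \to \tfrac12$), with some care over removable singularities, to land on \eqref{AM2}. Your route instead works directly with \eqref{real density}: the Fubini swaps and the contiguous recurrence $\Gamma(a+1,s)=a\Gamma(a,s)+s^ae^{-s}$ are all valid, and I have checked that your intermediate reduction $M_{2,N}^{\rm r}=N\,M_{0,N}^{\rm r}-\tfrac{4}{3}\Gamma(N+\tfrac12)/(\sqrt{2\pi}\,(N-2)!)$ is exactly right (e.g.\ at $N=2$ it gives $3/\sqrt2$, consistent with a direct computation). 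What your approach buys is independence from the forward reference to Theorem \ref{Thm_GinOE moments}, at the price of importing \eqref{1.12} from \cite{EKS94} and of the final three-term identity among ${}_2F_1(1,-\tfrac12;N;\tfrac12)$, ${}_2F_1(2,-\tfrac12;N+1;\tfrac12)$, ${}_2F_1(1,-\tfrac32;N;\tfrac12)$ and the constant $\tfrac{2(N-1)}{3}$. That identity is where the real work sits, but it is both true (I verified it numerically at $N=2$ and $N=3$) and guaranteed to be reachable by contiguous relations: the inhomogeneous term is no obstruction, since the constant $1$ is itself the contiguous function ${}_2F_1(0,-\tfrac12;N;\tfrac12)$, so all four quantities lie in the rank-two module over rational functions of $N$ spanned by a ${}_2F_1$ and one neighbour, and a mechanical (if tedious) elimination produces the relation. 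The only cosmetic slip is the normalisation of the elementary piece of $I_2$: the bare integral $\int_0^\infty s^{N-3/2}e^{-2s}\,ds$ equals $2^{1/2-N}\Gamma(N-\tfrac12)$, and it is only after reinstating the prefactor $2^{N/2}$ that one gets your quoted $2^{(1-N)/2}\Gamma(N-\tfrac12)$; your subsequent assembly shows you used the correct value.
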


\begin{proof}
We will assume temporarily the validity of the evaluation formula given by the $p=1$ case of $M_{2p,N}^{\rm r}$ in Theorem~\ref{Thm_GinOE moments}
(this will be proved for all non-negative $p$ in the next section). This gives 
\begin{equation}
\begin{split}
M_{2,N}^{ \rm r }   =  \frac{1}{(N-2)!}  \sum_{k=0}^\infty \frac{1}{ 2^{k+\frac12}  }  \frac{  \Gamma(k+\frac32)  \Gamma(N-k+\frac12)  }{  \Gamma(k+\frac12) \, \Gamma(-k+\frac52) }   +  N\, \mathbbm{1}_{ \{ N: \textup{ odd} \} }. 
\end{split}
\end{equation}

We note that 
\begin{align*}
&\quad  \sum_{k=0}^\infty \frac{1}{ 2^{k+\frac12}  }  \frac{  \Gamma(k+\frac32)  \Gamma(N-k+\frac12)  }{  \Gamma(k+\frac12) \, \Gamma(-k+\frac52) }  =  \sum_{k=0}^\infty \frac{1}{ 2^{k+\frac12}  }  \frac{ (k+\frac12)  \Gamma(N-k+\frac12)  }{  \Gamma(-k+\frac52) }  
\\
&=  \sum_{k=0}^\infty \frac{1}{ 2^{k+\frac12}  }  \frac{ (k+\frac12)  \Gamma(N-k+\frac12)  }{  \Gamma(-k+\frac52) }  =  \sum_{k=0}^\infty \frac{1}{ 2^{k+\frac12}  }  \frac{ ( (k+\frac32)-1)  \Gamma(N-k+\frac12)  }{  \Gamma(-k+\frac52) }  
\\
&=   \frac{\Gamma(N+\tfrac12)}{ 6\sqrt{\pi}  } \pFq{2}{1}{ -\frac32, -N-\frac12 }{ -N+\frac12 }{-1}   +\frac{ \Gamma(N-\tfrac12) }{ 2\sqrt{\pi} }  \pFq{2}{1}{ -\frac12, -N-\frac12 }{ -N+\frac32 }{-1}
\\
&=  \frac{ \sqrt{\pi} (-1)^N }{ 6 \, \Gamma(-N+\frac12)   }   \pFq{2}{1}{ -\frac32, -N-\frac12 }{ -N+\frac12 }{-1}  - \frac{ \sqrt{\pi}(-1)^N }{ 2\, \Gamma(-N+\tfrac32) }  \pFq{2}{1}{ -\frac12, -N-\frac12 }{ -N+\frac32 }{-1}.
\end{align*}
Recall the linear transformation \cite[Eq.(15.8.5)]{NIST}
\begin{align}
\begin{split} \label{linear transform 2F1}
\frac{ \sin(\pi ( c-a-b ) )  }{ \pi \,\Gamma(c) } \pFq{2}{1}{a,b}{c}{z} &= \frac{ z^{-a} }{ \Gamma(c-a)\Gamma(c-b) \Gamma(a+b-c+1) } \pFq{2}{1}{a,a-c+1}{a+b-c+1}{1- \frac{1}{z} }
\\
&\quad - \frac{(1-z)^{ c-a-b }z^{a-c} }{ \Gamma(a)\Gamma(b) \Gamma( c-a-b+1) } \pFq{2}{1}{c-a,1-a}{ c-a-b+1}{1-\frac{1}{z}}.
\end{split}
\end{align}
Here, let us mention that the regularised hypergeometric function $ {}_2 \textbf{F}_1 $ in  \cite[Eq.(15.8.5)]{NIST} is given by 
 $$
   {}_2 \textbf{F}_1 (a,b;c;z)= \frac{1}{\Gamma(c)} {}_2 F_1(a,b;c;z). 
   $$
%\begin{align}
%\begin{split}
%\frac{ \sin(\pi ( c-a-b ) )  }{ \pi \,\Gamma(c) } \pFq{2}{1}{a,b}{c}{z} &= \frac{1}{ \Gamma(c-a)\Gamma(c-b) \Gamma(a+b-c+1) } \pFq{2}{1}{a,b}{a+b-c+1}{1-z}
%\\
%&\quad - \frac{(1-z)^{ c-a-b } }{ \Gamma(a)\Gamma(b) \Gamma( c-a-b+1) } \pFq{2}{1}{c-a,c-b}{ c-a-b+1}{1-z} 
%\end{split}
%\end{align}
%\begin{align*}
%\frac{ \sqrt{\pi} (-1)^N }{ 6 \, \Gamma(-N+\frac12)   }     \pFq{2}{1}{  -N-\frac12 , -\frac32 }{ -N+\frac12 }{-1} & =  \frac{(-1)^N 4 \sqrt{2\pi} }{ 15 \Gamma(-N-\frac12) } \pFq{2}{1}{1,N+\frac32}{ \frac72 }{2},
%\\
% \frac{ \sqrt{\pi}(-1)^N }{ 2\, \Gamma(-N+\tfrac32) }  \pFq{2}{1}{ -N-\frac12,-\frac12 }{ -N+\frac32 }{-1} &= \frac{(-1)^N 8 \sqrt{2\pi} }{ 15 \Gamma(-n-\frac12) } \pFq{2}{1}{2,N+\frac32}{ \frac72 }{2}.
%\end{align*}
Using \eqref{linear transform 2F1} with $a=-N-1/2, b=-3/2,c=-N+1/2$, we have 
\begin{align*}
&\quad \frac{ 1 }{  \pi  \Gamma(-N+\frac12)   }     \pFq{2}{1}{  -N-\frac12 , -\frac32 }{ -N+\frac12 }{-1} =   \frac{ 2^{ 5/2 } }{ \Gamma(-N-\frac12)\Gamma(-\frac32) \Gamma( \frac{7}{2}) }  \pFq{2}{1}{1,N+\frac32}{ \frac72 }{2} 
\\
&=  -\frac{(-1)^{N} 2^{ 5/2 } \Gamma(N+\frac32)  }{ \pi \Gamma(-\frac32) \Gamma( \frac{7}{2}) }  \pFq{2}{1}{1,N+\frac32}{ \frac72 }{2}
= -\frac{(-1)^{N} 2^{ 7/2 } \Gamma(N+\frac32)  }{ 5 \pi^2 }  \pFq{2}{1}{1,N+\frac32}{ \frac72 }{2},
\end{align*}
which leads to 
$$
\frac{ \sqrt{\pi} (-1)^N }{ 6 \, \Gamma(-N+\frac12)   }     \pFq{2}{1}{  -N-\frac12 , -\frac32 }{ -N+\frac12 }{-1}  =  - \frac{4}{15}\sqrt{ \frac{2}{\pi}  } \Gamma(N+\tfrac32)  \pFq{2}{1}{1,N+\frac32}{ \frac72 }{2}. 
$$
Here, we have used $1/\Gamma(-N+2)=0$, the reflection formula of Gamma function 
\begin{equation}\label{Gamma reflection}
\Gamma(z)\Gamma(1-z)=\pi/\sin(\pi z)  
\end{equation}
and 
$$
\Gamma(-\tfrac32)= \frac{4\sqrt{\pi}}{3}, \qquad \Gamma( \tfrac72 )= \frac{15 \sqrt{\pi}}{8} 
$$
in each identity. 
Similarly, we have 
\begin{align*}
%\frac{ \sqrt{\pi} (-1)^N }{ 6 \, \Gamma(-N+\frac12)   }     \pFq{2}{1}{  -N-\frac12 , -\frac32 }{ -N+\frac12 }{-1} & =  - \frac{4}{15}\sqrt{ \frac{2}{\pi}  } \Gamma(N+\tfrac32)  \pFq{2}{1}{1,N+\frac32}{ \frac72 }{2},
%\\
 \frac{ \sqrt{\pi}(-1)^N }{ 2\, \Gamma(-N+\tfrac32) }  \pFq{2}{1}{ -N-\frac12,-\frac12 }{ -N+\frac32 }{-1} &=  - \frac{8}{15}\sqrt{ \frac{2}{\pi}  } \Gamma(N+\tfrac32)  \pFq{2}{1}{2,N+\frac32}{ \frac72 }{2}.
\end{align*}
Furthermore, again using \eqref{linear transform 2F1} after removing the removable singularities, it follows that 
\begin{align*}
&\quad - \frac{4}{15}\sqrt{ \frac{2}{\pi}  } \Gamma(N+\tfrac32)  \pFq{2}{1}{1,N+\frac32}{ \frac72 }{2} =  -\frac{1}{ \sqrt{2} }  \frac{ \Gamma(N+\tfrac32)  }{ \Gamma(\frac{7}{2}) } \pFq{2}{1}{1,N+\frac32}{ \frac72 }{2}
\\
&= \sqrt{ \frac{2}{\pi} } \frac{ \Gamma(N+\frac32) }{3 (N-1)  }  \pFq{2}{1}{1,-\frac32}{N}{\frac12} + \frac{(-1)^N (N-2)! }{ 8 }\pFq{2}{1}{ \frac52,0 }{ 2-N }{\frac12} 
\\
&= \sqrt{ \frac{2}{\pi} } \frac{ \Gamma(N+\frac32) }{3 (N-1)  }  \pFq{2}{1}{1,-\frac32}{N}{\frac12} + \frac{(-1)^N (N-2)! }{ 8 } 
\end{align*}
and 
\begin{align*}
&\quad   \frac{8}{15}\sqrt{ \frac{2}{\pi}  } \Gamma(N+\tfrac32)  \pFq{2}{1}{2,N+\frac32}{ \frac72 }{2} =  \sqrt{2}  \frac{ \Gamma(N+\tfrac32)  }{ \Gamma(\frac{7}{2}) } \pFq{2}{1}{2,N+\frac32}{ \frac72 }{2} 
\\
&= \frac{1}{ \sqrt{2\pi} } \frac{ \Gamma(N+\frac32)  }{N(N-1)} \pFq{2}{1}{2,-\frac12}{N+1}{\frac12}  +\frac{ (-1)^N (N-1)!}{2} \pFq{2}{1}{\frac32,-1}{1-N}{\frac12}
\\
&= \frac{1}{ \sqrt{2\pi} } \frac{ \Gamma(N+\frac32)  }{N(N-1)} \pFq{2}{1}{2,-\frac12}{N+1}{\frac12}  +\frac{ (-1)^N (4N-1)\, (N-2)!}{8}. 
\end{align*}
Therefore we obtain
\begin{align*}
&\quad \sqrt{ \frac{2}{\pi}  } \frac{4}{15}  \frac{\Gamma(N+\tfrac32) }{(N-2)!} \bigg( 2\,\pFq{2}{1}{2,N+\frac32}{ \frac72 }{2}-   \pFq{2}{1}{1,N+\frac32}{ \frac72 }{2} \bigg)
\\
&=  \frac{1}{ \sqrt{2\pi} } \frac{ \Gamma(N+\frac32)  }{N!} \pFq{2}{1}{2,-\frac12}{N+1}{\frac12} +  \sqrt{ \frac{2}{\pi} } \frac{ \Gamma(N+\frac32) }{3 (N-1)!  }  \pFq{2}{1}{1,-\frac32}{N}{\frac12} + (-1)^N\frac{N}{2}
\\
&= \sqrt{ \frac{2}{\pi} } \frac{\Gamma(N+\frac32)}{(N-1)!} \bigg( \frac{1}{2N}  \pFq{2}{1}{2,-\frac12}{N+1}{\frac12} + \frac{1}{3}  \pFq{2}{1}{1,-\frac32}{N}{\frac12} \bigg) + (-1)^N\frac{N}{2}, 
\end{align*}
which completes the proof.
\end{proof}

Focusing now on (\ref{AM2}) we note from \eqref{2F1 asymp} with $a=2,b=-1/2, c=1$ that
 \begin{align*}
 \frac{1}{ \sqrt{2\pi} } \frac{ \Gamma(N+\frac32)  }{N!} \pFq{2}{1}{2,-\frac12}{N+1}{\frac12} & %\sim N\sqrt{ \frac{2}{\pi} N }  \sum_{s=0}^{m-1} \frac{\alpha_s}{2} \frac{ \Gamma(-1/2+s) }{ \Gamma(-1/2) } N^{-s-1} 
 %\\
 %&
 \sim N\sqrt{ \frac{2}{\pi} N }  \sum_{s=1}^{m-1} \frac{\alpha_{s-1}}{2} \frac{ \Gamma(-3/2+s) }{ \Gamma(-1/2) } N^{-s}, 
 \end{align*}
 where 
\begin{equation}
		4\Big( \frac{e^t-1}{t} \Big)^{-3/2} (1+e^{-t})^{-2} =\sum_{s=0}^\infty \alpha_s\, t^s.
	\end{equation}
Also, using \eqref{2F1 asymp} with $a=1,b=-3/2,c=0$, we have
\begin{align*}
 \sqrt{ \frac{2}{\pi} } \frac{ \Gamma(N+\frac32) }{3 (N-1)!  }  \pFq{2}{1}{1,-\frac32}{N}{\frac12}  \sim N \sqrt{ \frac{2}{\pi} N }  \sum_{s=0}^{m-1} \frac{\beta_s}{3} \frac{ \Gamma(-3/2+s) }{ \Gamma(-3/2) } N^{-s}
\end{align*}
where 
\begin{equation}
2\Big( \frac{e^t-1}{t} \Big)^{-5/2} e^{t } (1+e^{-t})^{-1} =\sum_{s=0}^\infty \beta_s \, t^s.
\end{equation}
Therefore we have shown that
\begin{align*}
&\quad \sqrt{ \frac{2}{\pi} } \frac{\Gamma(N+\frac32)}{(N-1)!} \bigg( \frac{1}{2N}  \pFq{2}{1}{2,-\frac12}{N+1}{\frac12} + \frac{1}{3}  \pFq{2}{1}{1,-\frac32}{N}{\frac12} \bigg) 
\\
&\ \sim  N \sqrt{ \frac{2}{\pi} N } \bigg( \frac{\beta_0}{3}+ \sum_{s=1}^{m-1} \frac{ \beta_s-\alpha_{s-1} }{ 3 }  \frac{ \Gamma(-3/2+s) }{ \Gamma(-3/2) } N^{-s} \bigg). 
\end{align*}
Substituting in (\ref{AM2}) the expansion \eqref{M2N asymp} follows.
% \begin{align*}
%&\quad  \frac{ \Gamma(-3/2+s) }{ \Gamma(-3/2) } \frac13 \bigg(    2\Big( \frac{e^t-1}{t} \Big)^{-5/2} \frac{e^{t } }{ 1+e^{-t} }  - 4\Big( \frac{e^t-1}{t} \Big)^{-3/2} \frac{t}{ (1+e^{-t})^{2} }  -1  \bigg)  = \sum_{s=1}^\infty b_s\,t^s. 
% \end{align*}
\hfill $\square$

\subsection{Proof of Theorem~\ref{Thm_GinOE moments}}

In relation to Theorem~\ref{Thm_GinOE moments},
note that the spectral moments \eqref{M 2pN c}  of complex eigenvalues follow from \eqref{spectral moments for real and complex}, \eqref{real+cplx} and their real counterparts \eqref{M 2pN r}. 
Thus it suffices to show \eqref{M 2pN r}. The more general setting of Proposition \ref{P1.3} we will be assumed, requiring only that ${\rm Re} \, p > -1/2$.

By using \cite[Cor.~4.1]{EKS94}, we have  
%\begin{equation}
%\sqrt{2\pi} e^{ \frac{x^2}{2} } \sum_{ N=1 }^\infty \rho_N^{ \rm r  }(x) z^{N-1} = 1+\frac{z}{1-z} e^{ (2z-1) x^2/2 } + z  \sqrt{ \frac{\pi}{2} } |x| e^{ \frac{z^2x^2}{2} } \bigg( \erf\Big( \frac{ z |x| }{ \sqrt{2} } \Big) + \erf\Big( \frac{ (1-z) |x| }{ \sqrt{2} } \Big)  \bigg),  
%\end{equation}
%equivalently,
\begin{equation}
\sum_{ N=1 }^\infty \rho_N^{ \rm r  }(x) z^N = \mathcal{F}(z,x),
\end{equation}
where 
\begin{equation}
\mathcal{F}(z,x):= \frac{z }{ \sqrt{2\pi} } \bigg( e^{ -\frac{x^2}{2} } +\frac{z}{1-z} e^{ (z-1) x^2 } \bigg)  + \frac{ z^2  |x| }{ 2 } e^{ \frac{(z^2-1)x^2}{2} } \bigg( \erf\Big( \frac{ z |x| }{ \sqrt{2} } \Big) + \erf\Big( \frac{ (1-z) |x| }{ \sqrt{2} } \Big)  \bigg). 
\end{equation}
It then follows that 
\begin{equation}
\sum_{ N=0 }^\infty M_{2p,N}^{ \rm r } \, z^N =  \int_\R |x|^{2p} \mathcal{F}(z,x)\,dx. 
\end{equation}

Since 
\begin{equation*}
\int_\R |x|^{2p} e^{-\frac{x^2}{2} }\,dx  =  2^{ p+\frac12 } \, \Gamma\Big(p+\frac12 \Big), \qquad  \int_\R |x|^{2p} e^{ (z-1)x^2 }\,dx  =  (1-z)^{ -p-\frac12 } \Gamma\Big(p+\frac12 \Big), 
\end{equation*}
where in the second integral it is assumed $|z|<1$, we have 
\begin{align}
\frac{z }{ \sqrt{2\pi} } \int_\R   |x|^{2p} \bigg( e^{ -\frac{x^2}{2} } +\frac{z}{1-z} e^{ (z-1) x^2 } \bigg)\,dx &= \frac{z }{ \sqrt{2\pi} } \bigg(  2^{ p+\frac12 } + \frac{z}{(1-z)^{ p+\frac32 }}   \bigg) \Gamma\Big(p+\frac12 \Big).  
\end{align}

On the other hand, by using the expansion \cite[Eq.(7.6.2)]{NIST} 
\begin{equation}
\erf(z) = \frac{2}{ \sqrt{\pi} } e^{-z^2} \sum_{k=0}^\infty \frac{ 2^k z^{2k+1} }{ (2k+1)!! },
\end{equation}
we have
%\begin{align*}
%\erf\Big( \frac{ z |x| }{ \sqrt{2} } \Big) + \erf\Big( \frac{ (1-z) |x| }{ \sqrt{2} } \Big) %&=  \frac{2}{ \sqrt{\pi} } \sum_{k=0}^\infty \frac{2^k}{(2k+1)!!}  \bigg(  e^{ -\frac{z^2x^2}{2} }  \Big( \frac{ z |x| }{ \sqrt{2} } \Big)^{2k+1} +e^{  -\frac{(1-z)^2 x^2 }{2} }   \Big( \frac{ (1-z) |x| }{ \sqrt{2} } \Big)^{2k+1}  \bigg)
%\\
%&= \sqrt{ \frac{2}{\pi} } \sum_{k=0}^\infty \frac{1}{(2k+1)!!}  \bigg(  e^{ -\frac{z^2x^2}{2} }  ( z |x|  )^{2k+1} +e^{  -\frac{(1-z)^2 x^2 }{2} } ( (1-z) |x| )^{2k+1}  \bigg).
%\end{align*}
%Thus 
\begin{align}
\begin{split}
& \int_\R |x|^{2p} \frac{ z^2  |x| }{ 2 } e^{ \frac{(z^2-1)x^2}{2} } \bigg( \erf\Big( \frac{ z |x| }{ \sqrt{2} } \Big) + \erf\Big( \frac{ (1-z) |x| }{ \sqrt{2} } \Big)  \bigg) \,dx 
\\
&\qquad =  \frac{z^2}{2} \sqrt{ \frac{2}{\pi} }  \sum_{k=0}^\infty \int_\R  \frac{ |x|^{2p+2k+2} }{ (2k+1)!! } \bigg( z^{2k+1} e^{ -\frac{x^2}{2} } +(1-z)^{2k+1} e^{ (z-1)x^2 } \bigg) \,dx   
\\
&\qquad = \frac{z^2}{2} \sqrt{ \frac{2}{\pi} }  \sum_{k=0}^\infty\frac{1}{(2k+1)!!} \bigg( z^{2k+1} 2^{ p+k+\frac32 } + (1-z)^{ k -p-\frac12 } \bigg)  \Gamma\Big( p+k+\frac32 \Big).  
\end{split}
\end{align}
Combining the above, we have 
\begin{align*}
\int_\R |x|^{2p} \mathcal{F}(z,x)\,dx & =  \frac{z }{ \sqrt{2\pi} } \bigg(  2^{ p+\frac12 } + \frac{z}{(1-z)^{ p+\frac32 }}   \bigg) \Gamma\Big(p+\frac12 \Big)  
\\
&\quad + \frac{z^2}{ \sqrt{2\pi} } \sum_{k=0}^\infty\frac{1}{(2k+1)!!} \bigg( z^{2k+1} 2^{ p+k+\frac32 } + (1-z)^{ k -p-\frac12 } \bigg)  \Gamma\Big( p+k+\frac32 \Big) ,
\end{align*}
which can be written as 
\begin{equation}
\int_\R |x|^{2p} \mathcal{F}(z,x)\,dx= \frac{z^2}{ \sqrt{2\pi} } \sum_{k=0}^\infty\frac{1}{(2k-1)!!} \bigg( z^{2k-1} 2^{ p+k+\frac12 } + (1-z)^{ k -p-\frac32 } \bigg)  \Gamma\Big( p+k+\frac12 \Big)  .
\end{equation}

By using 
\begin{align*}
(1-z)^{k-p-\frac32} = \sum_{l=0}^\infty \binom{k-p-3/2}{l} (-z)^{l} = \sum_{l=0}^\infty \frac{ \Gamma(k-p-\frac12) }{ l! \, \Gamma(k-p-l-\frac12) }  (-z)^{l}, 
\end{align*}
we obtain
%\begin{align*}
%z^2\sum_{k=0}^\infty\frac{1}{(2k-1)!!}  (1-z)^{ k -p-\frac32 }   \Gamma\Big( p+k+\frac12 \Big)
%&= z^2\sum_{k=0}^\infty\frac{1}{(2k-1)!!}  \sum_{l=0}^\infty \frac{ \Gamma(k-p-\frac12) }{ l! \, \Gamma(k-p-l-\frac12) }  (-z)^{l}   \Gamma\Big( p+k+\frac12 \Big) 
%\\
%&= \sum_{l=0}^\infty \bigg( \sum_{k=0}^\infty \frac{ (-1)^l \Gamma(k+p+\frac12) \Gamma(k-p-\frac12) }{(2k-1)!!\,l!\, \Gamma(k-p-l-\frac12)} \bigg) z^{l+2}
%\\
%&=  \sum_{N=0}^\infty \bigg( \sum_{k=0}^\infty \frac{ (-1)^{N} \Gamma(k+p+\frac12) \Gamma(k-p-\frac12) }{(2k-1)!!\,(N-2)!\, \Gamma(k-p-N+\frac32)} \bigg) z^{N}
%\end{align*}
%This gives that 
\begin{align}
\begin{split}
\int_\R |x|^{2p} \mathcal{F}(z,x)\,dx &= \frac{1}{\sqrt{2\pi}} \sum_{k=0}^\infty \frac{ \Gamma(k+p+\frac12) 2^{p+k+\frac12} }{ (2k-1)!! } z^{2k+1} 
\\
&\quad + \frac{1}{\sqrt{2\pi}} \sum_{N=0}^\infty \bigg( \sum_{k=0}^\infty \frac{ (-1)^{N} \Gamma(k+p+\frac12) \Gamma(k-p-\frac12) }{(2k-1)!!\,(N-2)!\, \Gamma(k-p-N+\frac32)} \bigg) z^{N}.
\end{split}
\end{align}
This gives rise to 
\begin{equation}
\begin{split}
M_{2p,N}^{ \rm r }  & = \frac{1}{\sqrt{2\pi}}  \frac{(-1)^N}{(N-2)!}  \sum_{k=0}^\infty \frac{  \Gamma(k+p+\frac12) \Gamma(k-p-\frac12) }{(2k-1)!! \, \Gamma(k-p-N+\frac32)} +\frac{1}{\sqrt{2\pi}}   \frac{ \Gamma(p+N/2) 2^{p+N/2} }{ (N-2)!! }  \mathbbm{1}_{ \{ N: \textup{ odd} \} }
\\
&  =  \frac{1}{(N-2)!}  \sum_{k=0}^\infty \frac{1}{ 2^{k+\frac12}  }  \frac{  \Gamma(k+p+\frac12)  \Gamma(N+p-k-\frac12)  }{  \Gamma(k+\frac12) \, \Gamma(p-k+\frac32) }   +  2^p \frac{ \Gamma(p+N/2)  }{ \Gamma(N/2) } \mathbbm{1}_{ \{ N: \textup{ odd} \} }. 
\end{split}
\end{equation}
Note here that 
\begin{align*}
\frac{1}{\sqrt{2\pi}}   \frac{ \Gamma(p+N/2) 2^{p+N/2} }{ (N-2)!! }  &=  \frac{ \Gamma(p+N/2) 2^{p} }{ \Gamma(N/2) }   
\end{align*}
and that 
\begin{align*}
\frac{1}{\sqrt{2\pi}}   \frac{  \Gamma(k+p+\frac12) \Gamma(k-p-\frac12) }{(2k-1)!! \, \Gamma(k-p-N+\frac32)} & =  \frac{1}{ 2^{k+\frac12}  }  \frac{  \Gamma(k+p+\frac12) \Gamma(k-p-\frac12) }{  \Gamma(k+\frac12) \, \Gamma(k-p-N+\frac32)} 
\\
& = \frac{ (-1)^{N} }{ 2^{k+\frac12} }   \frac{  \Gamma(k+p+\frac12)  \Gamma(N+p-k-\frac12)  }{  \Gamma(k+\frac12) \, \Gamma(p-k+\frac32) },
%\\
%&= \frac{1}{ \sqrt{2}  }  \frac{  \Gamma(p+\frac12) \Gamma(-p-\frac12) }{  \Gamma(\frac12) \, \Gamma(-p-N+\frac32) }   \frac{ (p+\frac12)_k (-p-\frac12)_k (1)_k }{  (\frac12)_k (-p-N+\frac32)_k } \frac{(1/2)^k}{k!}.
%\\
%&= \frac{1}{ 2^{k+\frac12}  } \Big(k+\frac12\Big) \Big(k+\frac32\Big) \dots \Big(k+p-\frac12\Big)   \Big( k-p-N +\frac32 \Big) \Big( k-p-N+\frac52 \Big) \dots \Big( k-p-\frac32 \Big) .
%\\
%&= (-1)^N\, (2k+1)(2k+3)\dots (2k+2p-1) (2p-2k+3) (2p-2k+5)\dots (2p-2k+2N-3)
\end{align*}
where to obtain the final line use has been made of \eqref{Gamma reflection}. 
Now the expression \eqref{M 2pN r} follows from the definition \eqref{def of gen hypergeometric} of the generalised hypergeometric function. 
\hfill $\square$

\subsection{Proof of Proposition \ref{P1.5}}

We follow a strategy introduced in 
\cite[Proofs of Thms.~4 and 9]{WF14} in relation to the differential equations satisfied by the eigenvalue density for the GUE and GOE.
Due to the symmetry $x \mapsto -x$, it suffices to consider the case $x>0.$
Let 
\begin{align*}
f(x) = \frac{d}{dx} \bigg[  \Gamma(N-1,x^2) +  2^{(N-3)/2} e^{ -\frac{x^2 }{2} } x^{N-1} \gamma\Big( \frac{N-1}{2}, \frac{x^2}{2} \Big)  \bigg],
\end{align*}
so that according to \eqref{real density}, the function $f(x)$ is proportional to $\frac{d}{dx}\rho_N(x)$. In terms of $f(x)$, the differential equation of the proposition reads
\begin{equation}\label{2.18}
 x^2 f''(x) + x ( 3 x^2-3N+4) f'(x) + ( 2 x^2-2N+1 ) (  x^2-N+2 ) f(x)=0. 
\end{equation}

Using 
\begin{align*}
\frac{d}{dx}   \Gamma(N-1,x^2) = -2\, x^{2N-3} e^{-x^2} ,\qquad   2^{(N-3)/2} e^{ -\frac{x^2 }{2} } \frac{d}{dx}  \gamma\Big( \frac{N-1}{2}, \frac{x^2}{2} \Big) = x^{N-2} e^{-x^2}, 
\end{align*}
we have
\begin{align} \label{nk1}
\begin{split}
f(x)  &= - x^{2N-3} e^{-x^2} +  2^{(N-3)/2} e^{ -\frac{x^2 }{2} } x^{N-2} \Big( -x^2+ N-1 \Big)   \gamma\Big( \frac{N-1}{2}, \frac{x^2}{2} \Big) 
\\
& =: - a(x) + (-x^2 + N - 1) b(x).
\end{split}
\end{align}
Similarly, it follows that 
%\begin{align*}
%f'(x) &=  \Big( x^{2N-2} -(N-2) x^{2N-4} \Big) e^{-x^2}
%\\
%&\quad +  2^{(N-3)/2} e^{ -\frac{x^2 }{2} } \Big(  x^{N+1}-(2N-1)x^{N-1}+(N-1)(N-2) x^{N-3}  \Big)   \gamma\Big( \frac{N-1}{2}, \frac{x^2}{2} \Big) 
%\end{align*} and 
%\begin{align*}
%f''(x) &=   \Big( -x^{2N-1} +(2N-5) x^{2N-3}-(N-2)(N-3) x^{2N-5} \Big) e^{-x^2}
%\\
%&\quad +  2^{(N-3)/2} e^{ -\frac{x^2 }{2} } \Big(  -x^{N+2}+3N\,x^{N}-3(N-1)^2 x^{N-2} +(N-1) (N-2)(N-3) x^{N-4}  \Big)   \gamma\Big( \frac{N-1}{2}, \frac{x^2}{2} \Big) 
%\end{align*}
\begin{align}\label{nk2} 
\begin{split}
f'(x) &= x^{2N-3} \Big( x -\frac{ N-2 }{ x } \Big) e^{-x^2}
 \\
&\quad +  2^{(N-3)/2} e^{ -\frac{x^2 }{2} } x^{N-2} \Big(  x^{3}-(2N-1)x+\frac{(N-1)(N-2)}{x}  \Big)   \gamma\Big( \frac{N-1}{2}, \frac{x^2}{2} \Big) 
\\
& = \Big( x -\frac{ N-2 }{ x } \Big) a(x)  + \Big(  x^{3}-(2N-1)x+\frac{(N-1)(N-2)}{x}  \Big) b(x),
\end{split}
\end{align}
and 
\begin{align}\label{nk3}
\begin{split}
f''(x) &=  x^{2N-3} \Big( -x^{2} +(2N-5) -\frac{(N-2)(N-3)}{x^2}  \Big) e^{-x^2} 
\\
&\quad +  2^{(N-3)/2} e^{ -\frac{x^2 }{2} }  x^{N-2} \Big(  -x^{4}+3N\,x^{2}-3(N-1)^2  +\frac{(N-1) (N-2)(N-3)}{x^2}   \Big)   \gamma\Big( \frac{N-1}{2}, \frac{x^2}{2} \Big)  
\\
& = \Big( -x^{2} +(2N-5) -\frac{(N-2)(N-3)}{x^2}  \Big) a(x) 
\\
&\quad + \Big(  -x^{4}+3N\,x^{2}-3(N-1)^2  +\frac{(N-1) (N-2)(N-3)}{x^2}   \Big) b(x).
\end{split}
\end{align}
Combining all of the above, the desired differential equation (\ref{2.18}). The intermediate working is best carried out using computer algebra for efficiency and accuracy.
\hfill $\square$

\medskip 

We mention that at the beginning, the way to derive the exact form of the differential equation \eqref{2.18} is to use (\ref{nk1}) and (\ref{nk2}) regarded as a linear system to solve for $a(x)$, $b(x)$ in terms of $f(x)$ and $f'(x)$.
With this done, substituting in (\ref{nk3}) gives (\ref{2.18}).

\subsection{Proof of Corollary \ref{C1.6}}

To deduce the three term recurrence (\ref{1.17}) with $p$ in general complex, we multiply the differential equation by $x^p$ and integrate over $x \in (0,\infty)$.
Carrying out the integration using integration by parts gives (\ref{1.17}), provided $\re (p)$ is large enough so that all terms requiring evaluation at $x=0$ vanish. Analytic continuation removes the need for such a restriction.

In relation to the three term recurrence for the ${}_3F_2$ function (\ref{1.28}), we
 notice by direct computations that
 \begin{align*}
2(2p+5) 2^{p+2}\frac{ \Gamma(p+2+N/2) }{ \Gamma(N/2) } & = (2p+3) (6p+4N+7) 2^{p+1} \frac{ \Gamma(p+1+N/2) }{ \Gamma(N/2) } 
\\
&\quad -  (2p+1) (2p+N)(2p+2N+1) 2^{p}  \frac{ \Gamma(p+N/2) }{ \Gamma(N/2)  }. 
 \end{align*}
Therefore by \eqref{M 2pN r}, one can observe that the recursion formula (\ref{1.17}) implies (\ref{1.28}).
\hfill $\square$

\subsection{Proof of Corollary \ref{Cor_DE of MGF}}

Note that 
\begin{align*}
\frac{d^k}{dt^k}u(t) = \int_\R e^{tx} x^k \rho_N^{ \rm r }(x)\,dx. 
\end{align*}
Then the differential equation \eqref{DE MGF}  follows from Proposition~\ref{P1.3}, after multiplying by $e^{tx}$ and integrating over $x \in \R$ using integrating by parts. For the latter, we note
\begin{align*}
& \frac{d}{dx} \Big[   ( 2 x^2-2N+1 ) (  x^2-N+2 ) e^{tx}  \Big] -\frac{d^2}{dx^2} \Big[ x(3x^2-3N+4) e^{tx} \Big] + \frac{d^3}{dx^3}\Big[ x^2e^{tx} \Big]
\\
&\qquad= \Big ( 2 t x^4 - ( 3 t^2-8 ) x^3 + t(t^2 - 4 N   -13 ) x^2 +  (  (3 N+2) t^2 -8N-8  ) x + (2N^2+N)t \Big ) e^{tx}.   
\end{align*}

In relation to the Stieltjes transform, we first separate out the coefficients  of $\partial_x^2$ and $\partial_x$ in $\mathcal A_N[x]$ so that the differential equation (\ref{DE of real density}) reads
$$
\Big ( x^2 \partial_x^3 + R(x) \partial_x^2 + S(x) \partial_x \big ) \rho_N^{\rm r}(x) = 0,
$$
with
$$
R(x) = x (3 x^2 - 3N + 4), \qquad
S(x) = (2 x^2 - 2N + 1) (x^2 - N + 2).
$$
Next we manipulate this equation so that it takes the form
$$
\Big ( t^2 \partial_x^3 + R(t) \partial_x^2 + S(t) \partial_x \Big ) \rho_N^{\rm r}(x) 
= \Big ( (t^2 - x^2) \partial_x^3 + (R(t) - R(x)) \partial_x^2 +
(S(t) - S(x)) \partial_x \Big ) \rho_N^{\rm r}(x).
$$
Multiplying through by $1/(t-x)$ and integrating over $x \in \R$, the LHS is readily identified as $\mathcal A_N[t]$ after integration by parts. On the RHS the term $1/(t-x)$ can be cancelled with factors in the coefficients, which then reduce to polynomials symmetric in $t$ and $x$. Integration by parts requires that these polynomials be differentiated with respect to $x$ a suitable number of times, with the result giving the RHS of (\ref{1.33}).
\hfill $\square$

\section{Links between large $N$ expansions}\label{S3}

\subsection{Asymptotic expansion of the moment generating function}

Let us define the rescaled moment generating function
\begin{equation}
\widetilde{u}(t):= \frac{1}{\sqrt{N}} u \Big(\frac{t}{\sqrt{N}}\Big) = \int_\R e^{tx} \rho_N^{ \rm r }( \sqrt{N} x) \,dx.  
\end{equation}
Then \eqref{DE MGF} gives rise to 
\begin{equation}\label{3.2}
\bigg( \mathcal{D}_0[t] + \frac{\mathcal{D}_1[t]}{N}+ \frac{\mathcal{D}_2[t]}{N^2} \bigg) \widetilde{u}(t)=0,  
\end{equation}
where 
\begin{align}
\begin{split} \label{mathcal D0}
\mathcal{D}_0[t]& :=  2 t \, \partial_t^4 + 8 \, \partial_t^3  -4t \, \partial_t^2 -8 \, \partial_t + 2t  
%\\
%&\quad =2 ( t \, \partial_t^2+4\, \partial_t -t ) \circ(\pa_t^2-1)=2 ( \partial_t^2-1 )  \circ( t\,\partial_t^2+2\, \partial_t-t ),  
\end{split}
\\
\mathcal{D}_1[t]& := -3t^2 \, \partial_t^3 -13 t \, \partial_t^2 +( 3t^2-8  ) \, \partial_t +t,  %= -( t \, \partial_t^2+2\,\partial_t-t) \circ (3t \,\partial_t+1), 
\label{mathcal D1} 
\\
\mathcal{D}_2[t]  & :=  t^3 \, \partial_t^2+2t^2 \, \partial_t . 
\end{align}

As is consistent with the expansion of the even integer moments Theorem \ref{Thm_large N}, introduce the large $N$ expansion
\begin{equation}\label{3.6}
\widetilde{u}(t)= \sum_{k=0}^\infty \bigg ( \frac{ \widetilde{u}_{ (k) }(t) }{ N^k } + \frac{ \widetilde{u}_{ (k+1/2) }(t) }{ N^{k+1/2} } \bigg ).
\end{equation}
%where with $\rho_N^{\rm r}(\sqrt{N}x)$ expanded in the sense of distribution according to
Use the expansion coefficients therein to introduce a sequence of smoothed densities $\{ r_{(k)}(x), r_{(k+1/2)}(x) \}$ by the requirement that
\begin{equation}\label{3.8}
\widetilde{u}_{ (k) }(t) =
\int_{\R} e^{tx} r_{(k)}(x) \, dx, \qquad
\widetilde{u}_{ (k+1/2) }(t) =
\int_{\R} e^{tx} r_{(k+1/2)}(x) \, dx.
\end{equation}
Note that then, in a formal sense, and with the LHS interpreted as being always begin integrated against a smooth function, we then have
\begin{equation}
\rho_N^{ \rm r }(\sqrt{N}x) = \sum_{k=0}^\infty \bigg ( \frac{ {r}_{ (k) }(x) }{ N^k } + \frac{ {r}_{ (k+1/2) }(x) }{ N^{k+1/2} } \bigg ).
\end{equation}
We know from \cite[displayed equation below Eq.(3.5) and Eq.(3.8)]{BF23} that
\begin{equation} \label{r0 r12}
r_{(0)}(x) = \frac{1}{\sqrt{2\pi}} \, \mathbbm{1}_{(-1,1)}(x), \qquad r_{(1/2)}(x) = \frac14\Big( \delta(x-1)+\delta(x+1) \Big) , 
\end{equation}  
and so
\begin{equation} \label{u0 1/2 sinh cosh}
\widetilde{u}_{(0)}(t) :=  \sqrt{ \frac{2}{\pi} } \frac{ \sinh(t) }{t}, \qquad \widetilde{u}_{(1/2)}(t) :=   \frac{\cosh(t)}{2}. 
\end{equation}
Note also that by \eqref{M0N asymp} and \eqref{M2N asymp}, for $k \ge 1$,
\begin{equation} \label{u0 origin}
\widetilde{u}_{(k)}(0)=  \sqrt{ \frac{2}{\pi}  } \, a_k, \qquad  \widetilde{u}''_{(k)}(0)=  \sqrt{ \frac{2}{\pi}  } \, b_k
\end{equation}
and 
\begin{equation}
\widetilde{u}_{(k+1/2)}(0)= \widetilde{u}''_{(k+1/2)}(0)= 0. 
\end{equation}

Scaling (\ref{3.2}), $t \mapsto t/\sqrt{N}$, and substituting  (\ref{3.6}) shows
\begin{equation} \label{recursion u tilde k}
\mathcal{D}_0[t]\, \widetilde{u}_{ (k) }(t) +  \mathcal{D}_1[t]\, \widetilde{u}_{ (k-1) }(t) +  \mathcal{D}_2[t]\, \widetilde{u}_{ (k-2) }(t) =0 
\end{equation}
and 
\begin{equation} \label{recursion u tilde k 12}
\mathcal{D}_0[t]\, \widetilde{u}_{ (k+1/2) }(t) +  \mathcal{D}_1[t]\, \widetilde{u}_{ (k-1/2) }(t) +  \mathcal{D}_2[t]\, \widetilde{u}_{ (k-3/2) }(t) =0, 
\end{equation}
with the convention that $\widetilde{u}_j \equiv 0$ if $j <0.$
One observes that the differential operator $\mathcal{D}_0[t]$ has the factorisations
\begin{equation}
\mathcal{D}_0[t]  =2 ( t \, \partial_t^2+4\, \partial_t -t ) \circ(\pa_t^2-1)=2 ( \partial_t^2-1 )  \circ( t\,\partial_t^2+2\, \partial_t-t ).  
\end{equation}
From the explicit functional forms of $\widetilde{u}_{(0)}(t)$  and $\widetilde{u}_{(1/2)}(t)$ (\ref{u0 1/2 sinh cosh}) it is observed that both are annihilated by $\mathcal{D}_0[t]$. 
Indeed, the general even solution to $\mathcal{D}_0[t] f(t)=0$ is of the form
$$
f(t) = c_0 \cosh(t) + c_1 \frac{\sinh(t)}{t}, \qquad c_0,c_1 \in \R.  
$$

By taking $k=1$ in \eqref{recursion u tilde k}, 
\begin{align*}
\mathcal{D}_0[t] \,\widetilde{u}_1(t)+ \mathcal{D}_1[t] \,\widetilde{u}_0(t) = \mathcal{D}_0[t] \,\widetilde{u}_1(t) - 6  \sqrt{ \frac{2}{\pi} } \sinh(t) =0.
\end{align*}
By solving this differential equation with the initial condition \eqref{u0 origin}, we have
\begin{align}
\widetilde{u}_{(1)}(t) =  \sqrt{ \frac{2}{\pi} } \frac{3}{8}  \bigg( t \sinh(t) - \cosh(t)  \bigg) . 
\end{align}
%Then by taking $k=2$ in \eqref{recursion u tilde k}, 
%\begin{align*}
%\mathcal{D}_0[t] \,\widetilde{u}_2(t) =  -\mathcal{D}_1[t] \,\widetilde{u}_1(t) -  \mathcal{D}_2[t] \,\widetilde{u}_0(t) =  \frac14 t (33 \cosh(t) + 23 t \sinh(t) ),
%\end{align*}
Similarly, it follows that 
\begin{align}
\widetilde{u}_{(2)}(t) &= \sqrt{ \frac{2}{\pi} }  \frac{1}{384} \bigg(  (23t^2+9)\,t\, \sinh(t)-(26t^2+9) \cosh(t)  \bigg),
\\
\widetilde{u}_{(3)}(t) &= \sqrt{ \frac{2}{\pi} } \frac{1}{15360} \bigg(  ( 91 t^4- 285 t^2 -405) \,t\,\sinh(t) -5 ( t^4- 84 t^2 -81) \cosh(t)  \bigg).
\end{align}
In general, one can observe that $\widetilde{u}_k$ is of the form
\begin{equation}
\widetilde{u}_{(k)}(t)=  \sqrt{ \frac{2}{\pi} } \bigg( P_{k,1}(t)\,t\, \sinh(t) + P_{k,2}(t) \,\cosh(t) \bigg),
\end{equation}
where $P_{k,1}$ and $P_{k,2}$ are some even polynomials of degree $k+1.$ The corresponding quantities in the expansion (\ref{3.6}) are then
\begin{equation}\label{3.20}
{r}_{(k)}(x)=  \frac{1}{ \sqrt{2 \pi} }  \bigg (  P_{k,1}(-\partial_x)(-\partial_x) \Big ( \delta(x-1) - \delta(x+1) \Big ) +
P_{k,2}(-\partial_x) \Big ( \delta(x-1) + \delta(x+1) \Big )
\bigg )
\end{equation}
as can be checked from (\ref{3.8}).

Regarding the half integer coefficients in (\ref{3.6}), we also have 
\begin{align}
\widetilde{u}_{(3/2)}(t) & = \frac18 \bigg( t^2\, \cosh(t) -t \, \sinh(t) \bigg),
\\
\widetilde{u}_{(5/2)}(t) & = \frac{1}{192}  \bigg(3 t^2 (t^2-1) \cosh(t) - t(  2 t^2-3) \sinh(t) \bigg).
\end{align}
For general $k,$ we have 
\begin{equation}
\widetilde{u}_{(k+1/2)}(t)=   \widehat{P}_{k,1}(t)\,t\, \cosh(t) + \widehat{P}_{k,2}(t) \,\sinh(t), 
\end{equation}
where $\widehat{P}_{k,1}$ and $\widehat{P}_{k,2}$ are certain odd polynomials of degree $k+1$, from which it follows 
\begin{equation}\label{3.20a}
{r}_{(k+1/2)}(x)= \frac{1}{2}  \bigg (  \widehat{P}_{k,1}(-\partial_x)(-\partial_x) \Big ( \delta(x-1) + \delta(x+1) \Big ) +
\widehat{P}_{k,2}(-\partial_x) \Big ( \delta(x-1) - \delta(x+1) \Big )
\bigg );
\end{equation}
cf.~(\ref{3.20}).
Notice also that  
\begin{align} \label{u 32 0}
\widetilde{u}_{(3/2)}(0)= \widetilde{u}_{(3/2)}''(0)=0, \qquad  \widetilde{u}_{(3/2)}''''(0)=1, \qquad \widetilde{u}_{(3/2)}^{(6)}(0)=3, \qquad \widetilde{u}_{(3/2)}^{(8)}(0)=6, \qquad \widetilde{u}_{(3/2)}^{(10)}(0)=10.
\end{align}
These coincide with the coefficients of the ${\rm O}(1/N)$ term in \eqref{terminating p2345}. 
Along the same lines, we have
\begin{align}  \label{u 52 0}
\widetilde{u}_{(5/2)}(0)= \widetilde{u}_{(5/2)}''(0)= \widetilde{u}_{(5/2)}''''(0)=0, \qquad \widetilde{u}_{(5/2)}^{(6)}(0)=4, \qquad \widetilde{u}_{(5/2)}^{(8)}(0)=22, \qquad \widetilde{u}_{(5/2)}^{(10)}(0)=70
\end{align}
which also correspond to the coefficients of the ${\rm O}(1/N^2)$ term in \eqref{terminating p2345}. To be consistent with the fact that the final sum in (\ref{1.21}) terminates, for general $k$ we must have ${\partial_t^{2j} } \tilde{u}_{(k+1/2)}(t) |_{t=0}=0$ for $j=0,\dots,k$.

\subsection{Asymptotic expansion of the Stieltjes transform}

Let us write 
\begin{equation}
\widetilde{W}(t):= \int_\R \frac{ \rho_N^{ \rm r }(\sqrt{N}x) }{ t-x }\,dx= W(\sqrt{N}t) 
\end{equation}
for the Stieltjes transform of the rescaled density. 
Then \eqref{1.33} can be rewritten as 
\begin{equation} 
\bigg( \widehat{\mathcal{D}}_0[t] + \frac{\widehat{\mathcal{D}}_1[t]}{N}+ \frac{ \widehat{\mathcal{D}}_2[t]}{N^2} \bigg) \widetilde{W}(t)= \Big(  4  - 2 t^2 + \frac{1}{N}  \Big) \frac{ M_{0,N}^{\rm r} }{ N^{1/2} } - 6 \frac{ M_{N,2}^{\rm r} }{ N^{3/2} },  
\end{equation}
where 
\begin{align}
&\widehat{\mathcal{D}}_0[t]:= 2(t^2-1)^2 \partial_t,
\\
& \widehat{\mathcal{D}}_1[t]:= (t^2-1)  (  3t\,\partial_t^2+5 \,\partial_t   ),
\\
& \widehat{\mathcal{D}}_2[t]:= t^2 \, \partial_t^3+4 t\,\partial_t^2+2 \,\partial_t . 
\end{align}
On the other hand, by Theorem~\ref{Thm_large N},  
\begin{equation}
 \Big(  4  - 2 t^2 + \frac{1}{N}  \Big) \frac{ M_{0,N}^{\rm r} }{ N^{1/2} } - 6 \frac{ M_{N,2}^{\rm r} }{ N^{3/2} }  \sim \sqrt{ \frac{2}{\pi} }  \bigg( 2-2t^2 + \sum_{l=1}^{\infty} \frac{ (4-2t^2)a_l+ a_{l-1} -6 b_l }{N^l} \bigg) - \frac{t^2+1}{ N^{1/2} } +\frac{1}{2N^{3/2}} . 
\end{equation}
%\begin{align*}
% \Big(  4  - 2 t^2 + \frac{1}{N}  \Big) \frac{ M_{0,N}^{\rm r} }{ N^{1/2} } - 6 \frac{ M_{N,2}^{\rm r} }{ N^{3/2} } 
%\\
%&= \Big(  4  - 2 t^2 + \frac{1}{N}  \Big) \sqrt{ \frac{2}{\pi} }   \bigg( 1+ \sum_{ l=1 }^{m-1} \frac{ a_l }{ N^l } +{\rm O}( N^{-m} ) \bigg)   -6 \sqrt{ \frac{2}{\pi} }   \bigg( \frac{1}{3}+ \sum_{l=1}^{m-1} \frac{b_l}{N^l}  +{\rm O}( N^{-m} ) \bigg) - \frac{t^2+1}{ N^{1/2} } +\frac{1}{2N^{3/2}}
%\sim \sqrt{ \frac{2}{\pi} }  \bigg( 2-2t^2 + \sum_{l=1}^{\infty} \frac{ (4-2t^2)a_l+ a_{l-1} -6 b_l }{N^l} \bigg) - \frac{t^2+1}{ N^{1/2} } +\frac{1}{2N^{3/2}} . 
%\end{align*}
Then as before, by recursively solving this system of differential equations with the initial condition $\widetilde{W}(t)={\rm O}(1/t)$ as $t \to \infty$, one can derive the expansion
\begin{equation}
\widetilde{W}(t)= \sum_{k=0}^\infty \bigg ( \frac{ \widetilde{W}_{ (k) }(t) }{ N^k } + \frac{ \widetilde{W}_{ (k+1/2) }(t) }{ N^{k+1/2} } \bigg ). 
\end{equation}
For instance, we have
\begin{align}
\widetilde{W}_{ (0) }(t) = \frac{1}{ \sqrt{2\pi} } \log \Big( \frac{t+1}{t-1} \Big), \qquad  \widetilde{W}_{ (1/2) }(t) =  \frac{t}{2(t^2-1)}, 
\end{align}
which are consistent with \eqref{r0 r12}, and  
\begin{equation}
\widetilde{W}_{ (1) }(t) = -\frac{1}{ \sqrt{2\pi} } \frac{3t(t^2-3)}{ 4(t^2-1)^2 }, \qquad   \widetilde{W}_{ (3/2) }(t)=  \frac{t}{(t^2-1)^3}. 
\end{equation}
In particular, one reads off that as $t \to \infty$, $\widetilde{W}_{(1)}(t) \asymp  t^{-1}$, whereas $\widetilde{W}_{(3/2)}(t) \asymp t^{-3}$. Generally it is required that for $t \to \infty$,
$\widetilde{W}_{(k)}(t) \asymp  t^{-1}$, whereas $\widetilde{W}_{(2k+1)/2}(t) \asymp {t^{-2k-1}}$, so as to be consistent with (\ref{1.21}).

%%%%%%%%%%bibliography%%%%%%%%%%%%%%%%%%%%%%%%%%%%%%%%%%%
\bibliographystyle{abbrv}

\end{document}